\newcommand{\VC}{\textsf{Vertex Cover}\xspace}
\newcommand{\CVD}{\textsf{Cluster Vertex Deletion}\xspace}
\newcommand{\CN}{\textsf{Chromatic Number}\xspace}
\newcommand{\FVS}{\textsf{Feedback Vertex Set}\xspace}
\newcommand{\ignore}[1]{}
\newtheorem{theorem}{Theorem}
\newtheorem{lemma}{Lemma}
\newtheorem{task}{Task}
\newtheorem{definition}{Definition}
\begin{document}

\title{Fast Dynamic Graph Algorithms for Parameterized Problems\thanks{
A preliminary version of this paper appears in the proceedings of SWAT 2014.
}}

\author{Yoichi Iwata
	\thanks{
		Department of Computer Science,
		Graduate School of Information Science and Technology,
		The University of Tokyo.
		\texttt{y.iwata@is.s.u-tokyo.ac.jp}
	}
\and Keigo Oka
	\thanks{
		Department of Computer Science,
		Graduate School of Information Science and Technology,
		The University of Tokyo.
		\texttt{ogiekako@is.s.u-tokyo.ac.jp}
	}
}

\date{}

\maketitle

\begin{abstract}
Fully dynamic graph is a data structure that (1) supports edge insertions and deletions and (2) answers problem specific
queries.
The time complexity of (1) and (2) are referred to as the update time and the query time
respectively.
There are many researches on dynamic graphs whose update time and query time are
$o(|G|)$, that is, sublinear in the graph size.
However, almost all such researches are for problems in P.
In this paper, we investigate dynamic graphs for NP-hard problems exploiting the
notion of fixed parameter tractability (FPT).

We give dynamic graphs for \VC and \CVD parameterized by the solution size $k$.
These dynamic graphs achieve almost the best possible
update time $O(\mathrm{poly}(k)\log n)$ and the query time
$O(f(\mathrm{poly}(k),k))$, where $f(n,k)$ is the time complexity of any static
graph algorithm for the problems.
We obtain these results by dynamically maintaining an approximate solution which can be used to construct a
small problem kernel.
Exploiting the dynamic graph for \CVD,
as a corollary, we obtain a quasilinear-time (polynomial) kernelization
algorithm for \CVD.
Until now, only quadratic time kernelization algorithms are known
for this problem.

We also give a dynamic graph for \CN parameterized by the solution size of
\CVD, and a dynamic graph for bounded-degree \FVS parameterized by the solution
size. Assuming the parameter is a constant, each dynamic graph can be updated in
$O(\log n)$ time and can compute a solution in $O(1)$ time.
These results are obtained by another approach.
\end{abstract}

\section{Introduction}
\subsection{Background}
\subsubsection{Parameterized Algorithms}
Assuming $\textrm{P}\neq \textrm{NP}$,
there are no polynomial-time algorithms solving NP-hard problems. On the other
hand, some problems are efficiently solvable when a certain parameter, e.g.
the size of a solution, is small.
\emph{Fixed parameter tractability} is one of the ways to capture such a
phenomenon.

A problem is in the class \emph{fixed parameter
tractable (FPT)} with respect to a parameter $k$ if there is an algorithm that
solves any problem instance of size $n$ with parameter $k$ in $O(n^df(k))$ time (\emph{FPT
time}), where $d$ is a constant and $f$ is some computable function.

\subsubsection{Dynamic Graphs}
(Fully) dynamic graph is a data structure that supports edge insertions, edge deletions, and answers certain problem
specific queries.
There are a lot of theoretical research on dynamic graphs for problems that
belong to $\mathrm{P}$, such as
\textsf{Connectivity}~\cite{DBLP:conf/stoc/HenzingerK95,DBLP:journals/jacm/HolmLT01,DBLP:conf/stoc/Thorup00,DBLP:conf/soda/Wulff-Nilsen13a,DBLP:conf/focs/EppsteinGIN92,DBLP:conf/stoc/PatrascuD04},
\textsf{$k$-Connectivity}~\cite{DBLP:journals/jacm/HolmLT01,DBLP:conf/focs/EppsteinGIN92},
\textsf{Minimum Spanning Forest}~\cite{DBLP:journals/jacm/HolmLT01,DBLP:conf/focs/EppsteinGIN92}, 
\textsf{Bipartiteness}~\cite{DBLP:journals/jacm/HolmLT01,DBLP:conf/focs/EppsteinGIN92}, 
\textsf{Planarity
Testing}~\cite{DBLP:conf/esa/ItalianoPR93,DBLP:conf/focs/EppsteinGIN92,DBLP:conf/stoc/Poutre94},
\textsf{All-pairs Shortest
Path}~\cite{DBLP:conf/swat/Thorup04,DBLP:conf/stoc/DemetrescuI03,DBLP:conf/stoc/Thorup05,DBLP:conf/esa/RodittyZ04,DBLP:conf/focs/RodittyZ04}
and \textsf{Directed Connectivity}~\cite{DBLP:conf/focs/DemetrescuI00,DBLP:conf/soda/Roditty03,DBLP:conf/focs/RodittyZ02,DBLP:conf/stoc/RodittyZ04,DBLP:conf/focs/Sankowski04},
and races for faster algorithms are going on.

On the contrary there have been few research on dynamic graphs related to FPT
algorithms. 
To the best of our knowledge, 
a dynamic data structure for counting subgraphs in sparse
graphs proposed by Zden\v ek Dv\v o\'rak and Vojt\v ech T\r
uma~\cite{DBLP:conf/wads/DvorakT13} and 
a dynamic data structure for tree-depth decomposition proposed by Zden\v ek
Dvo\v r\'ak, Martin Kupec and Vojt\v ech T\r
uma~\cite{DBLP:journals/corr/DvorakKT13} are only such dynamic graphs.
Both data structures support insertions and deletions of
edges, and compute the solution of the problems in time depending only
on $k$, where $k$ is the parameter of the problem.
For a fixed property expressed in monadic second-order logic, the dynamic
graph in~\cite{DBLP:journals/corr/DvorakKT13} also can answer whether the current
graph has the property.
For both algorithms, hidden constants are (huge) exponential in $k$.
In particular, update time of both algorithms become super-linear in graph
size $n$ even if $k$ is very small, say $O(\log \log n)$.

\subsection{Our Contribution}
In this paper, we investigate dynamic data structures for basic graph problems
in FPT. Table~\ref{tab:complexity} shows the problems we deal with and the time
complexities of the algorithms.

\subsubsection{Dynamic Graph for Vertex Cover and Cluster Vertex Deletion}
In Section~\ref{sec:VC} and~\ref{sec:CVD}, we present fully dynamic graphs for
\VC and \CVD, respectively.
Both dynamic data structures support
additions and deletions of edges, and can answer the solution of the
problem in time depending only on the solution size $k$.

\begin{table}[tb]
     \center
	  \begin{tabular}{| c || c | c | c |}
	    \hline
	  	Problem	& Update Time	& Query Time	& Section				\\
	  	\hline														
	  	\VC		& $O(k^2)$		& $f_{VC}(k^2,k)$	& \ref{sec:VC} 			\\
	  	\CVD	& $O(k^8+k^2\log n)$			& $f_{CVD}(k^5,k)$			& \ref{sec:CVD}
	  	\\
	  	\CVD	& $O(8^kk^6)$			& $O(1)$			& \ref{sec:CVD2}		\\
	  	\CN 	& $O(2^{2^k}\log n)$\footnotemark & $O(1)$ &
	  	\ref{sec:CN}
	  	\\
	  	\FVS	& $O(7.66^kk^3 + 2^kk^3d^3\log n)$			& $O(1)$			&
	  	\ref{sec:feedback}	\\
	  	\hline
	  \end{tabular}
  \caption{The time complexities of the dynamic graphs in this paper.
  $d$ is the degree bound, and $n$ is the number of the vertices.
  The parameter for \CN is cvd number (the size of a minimum cluster vertex
  deletion), and parameters for the other problems are its solution size.}
  \label{tab:complexity}
\end{table}
\footnotetext{More precisely, it is
$O(B_{2k}(4^kk^3 + \log n))$ as proved in
Section~\ref{sec:CN}. ($B_n$ is the \emph{Bell number} for $n$, the number of
ways to divide a set with $n$ elements.)}

For the dynamic graph for \VC, the time complexity of an edge addition or
deletion is $O(k^2)$ and the one of a query is $f_{VC}(k^2,k)$ where $f_{VC}(n,
k)$ is the time complexity of any static algorithm for \VC on a graph of size $n$.

For the dynamic graph for \CVD, the time complexity of an update is $O(k^8\log
n)$ and the one of a query is $f_{CVD}(k^5,k)$ where
$f_{CVD}(n, k)$ is the time complexity of any static algorithm for \CVD on a
graph of size $n$.
The extra $\log n$ factor arises because we use persistent
data structures to represent some vertex sets. This enables us to copy a set in
constant time.

Note that the time complexity of an update is
$\mathrm{poly}(k)\mathrm{polylog}(n)$ for both algorithms, instead of an
exponential function in $k$.
As for the time complexity of a query, its exponential term in $k$ is no more
than any static algorithms.

Let us briefly explain how the algorithms work.
Throughout the algorithm, we keep an approximate solution.
When the graph is updated, we efficiently construct a $\mathrm{poly}(k)$ size kernel by exploiting the approximate
solution, and then compute a new approximate solution on this kernel.
Here, we compute not an exact solution but an approximate
solution to achieve the update time polynomial in $k$.
To answer a query, we apply a static exact algorithm to the kernel.

To see goodness of these algorithms, consider
the situation such that a query is applied for every $r$ updates.
A trivial algorithm answers a query by running a static
algorithm.
Let the time complexity of the static algorithm be
$O(f(n,k))$.
In this situation, to deal with consecutive $r$ updates and one query, our
algorithm takes $O(r\mathrm{poly}(k)\mathrm{polylog}(n) +
f(\mathrm{poly}(k),k))$ time, and the trivial algorithm takes $O(f(n,k))$ time.
For example, let $f(n,k) = c^k + kn$ be the time complexity of the static
algorithm. (The time complexity of the current best FPT algorithm for \VC is
$O(1.2738^k + kn)$~\cite{DBLP:journals/tcs/ChenKX10}.)
Then if $r=\sqrt{n}$ and $c^k=\sqrt{n}$, the time complexity for the dynamic
graph algorithm is $\sqrt{n}\mathrm{polylog}(n)=o(n)$, sublinear in $n$.
That is, our algorithm works well even if the number of queries is fewer than
the number of updates.
This is an advantage of the polynomial-time update.
If $r=1$, our algorithm is faster than the trivial algorithm whenever $c^k < n$.
Even if $c^k$ is the dominant term, our algorithm is never
slower than the trivial algorithm.

Let us consider the relation between our results and the
result by Dvo\v r\'ak, Kupec and T\r
uma~\cite{DBLP:journals/corr/DvorakKT13}.
The size of a solution of \VC is called \emph{vertex cover number}, and 
the size of a solution of \CVD is called \emph{cluster vertex deletion number}
(\emph{cvd number}).
It is easy to show that tree-depth can be arbitrarily large
even if cvd number is fixed and vice versa.
Thus our result for \CVD is not included in their result.
On the other hand, tree-depth is
bounded by $\mathrm{vertex~cover~number} + 1$.
Thus their result indicates that \VC can be dynamically computed in $O(1)$ time if
vertex cover number is a constant. However, if it is not a constant, say
$O(\log \log n)$, the time complexity of their algorithm becomes no longer
sublinear in $n$.
The time complexity of our algorithm for \VC is further moderate as noted
above.

As an application of the dynamic graph for \CVD, we can obtain a
quasilinear-time kernelization algorithm for \CVD.
To compute a problem kernel of a graph $G=(V,E)$, starting from an empty graph, we
iteratively add the edges one by one while updating an approximate solution.
Finally, we compute a kernel from the approximate solution.
As shown in
Section~\ref{sec:CVD}, the size of the problem kernel is $O(k^5)$ and the time for
an update is $O(k^8\log |V|)$.
Thus, we obtain a polynomial kernel in $O(k^8|E|\log |V|)$ time.

Protti, Silva and
Szwarcfiter~\cite{DBLP:journals/mst/ProttiSS09}
proposed a linear-time kernelization algorithm for \textsf{Cluster Editing}
applying modular decomposition techniques.
On the other hand, to the best of our knowledge, for \CVD, only quadratic time
kernelization algorithms~\cite{DBLP:journals/mst/HuffnerKMN10} are known (until now). Though \CVD and \textsf{Cluster
Editing} are similar problems, it seems that their techniques cannot be directly applied to obtain
a linear-time kernelization algorithm for \CVD.

\subsubsection{Dynamic Graph for Chromatic Number Parameterized by CVD Number}
The study of problems parameterized by cvd number was initiated by
Martin Doucha and Jan Kratochv\'il~\cite{DBLP:conf/mfcs/DouchaK12}.
They studied the fixed parameter tractability of basic graph problems related to
coloring and Hamiltonicity, and proved that the problems \textsf{Equitable
Coloring}, \textsf{Chromatic Number}, \textsf{Hamiltonian Path} and 
\textsf{Hamiltonian Cycle} are in FPT parameterized by cvd number.

In this paper, we also obtained a fully dynamic data structure for \CN parameterized by cvd
number. Assuming the cvd number is a
constant, the time complexity of an update and a query is $O(1)$.
In our algorithm, we maintain not only a minimum cluster
vertex deletion but also more detailed
information, equivalent vertex classes in each cluster.
We consider two vertices in a same cluster are equivalent if their neighbors in the current solution are exactly same.
To update such underlying data structures efficiently, we present another dynamic graph for \CVD in
Section~\ref{sec:CVD2}.
Unlike the algorithms in section~\ref{sec:VC} and ~\ref{sec:CVD}, this algorithm deals with an update in exponential time in $k$, and it seems difficult to make it polynomial by the need to maintain equivalent classes.

Then, we design a dynamic graph for \CN in Section~\ref{sec:CN}.
In the algorithm, for each update we consider each possible coloring for the current minimum cluster vertex deletion and compute the minimum number of colors to color the other vertices exploiting the equivalent classes and a flow algorithm. 
Our algorithm is based on a static algorithm in~\cite{DBLP:journals/jcss/GuoGHNW06}.

\subsubsection{Dynamic Graph for Bounded-Degree Feedback Vertex Set}
Finally we present a fully dynamic data
structure for bounded-degree \FVS in Section~\ref{sec:feedback}.
Despite the restriction of the degree, 
we believe the result is still worth mentioning because the algorithm is never
obvious. The algorithm is obtained by exploiting a theorem
in~\cite{DBLP:journals/jcss/GuoGHNW06} and a classic Link-Cut Tree data structure
introduced by Sleator and Tarjan~\cite{DBLP:journals/jcss/SleatorT83}.
As with~\cite{DBLP:journals/jcss/GuoGHNW06}, we use the idea of 
\emph{iterative compression}.
Iterative compression is the technique introduced by Reed, Smith and Vetta~\cite{DBLP:journals/orl/ReedSV04}. Its central idea is to iteratively compute a minimum solution with size $k$ making use of a solution with size $k+1$.

This algorithm also takes exponential time in $k$ for an update mainly because we consider all $O(2^k)$ possibility of $X\cap X'$ where $X$ is the current solution and $X'$ is the updated solution.

It seems an interesting open question whether it is possible to construct an efficient
dynamic graph without the degree restriction.

\section{Notations} \label{sec:preliminaries}
Let $G=(V,E)$ be a simple undirected graph with vertices $V$ and edges $E$.
We consider that each edge in $E$ is a set of vertices of size two.
Let $|G|$ denote the \emph{size} of the graph $|V| + |E|$.
The \emph{neighborhood} $N_G(v)$ of a vertex $v$ is $\left\{ u\in V\mid
\{u,v\}\in E \right\}$, and the neighborhood $N_G(S)$ of a vertex set
$S\subseteq V$ is $ \bigcup_{v\in S}N_G(v)\setminus S$.
The \emph{closed neighborhood} $N_G[v]$ of a vertex $v$ is $N_G(v) \cup \{v\}$,
and the closed neighborhood $N_G[S]$ of a vertex set $S\subseteq V$ is $N_G(S)
\cup S$.
Let the \emph{incident edges} $\delta_G(v)$ of a vertex $v$ be the set of edges
incident to the vertex $v$.
The \emph{cut edges} $\delta_G(S,T)$ between two disjoint vertex subsets $S$ and $T$ are $\{\{u,v\}\in E\mid u\in S,
v\in T\}$.
We denote the degree of a vertex $v$ by $d_G(v)$.
We omit the subscript if the graph is apparent from the context.
The \emph{induced subgraph} $G[S]$ of a vertex set $S$ is the graph $(S, \left\{
e \in E\mid e\subseteq S \right\})$.
For an edge subset $F\subseteq E$, let $G-F$ be the graph $(V,E\setminus
F)$. 

By default, we use $k(G)$ or $k$ to denote the parameter value of the current
graph $G$.
When an algorithm updates a graph $G$ to $G'$, we use $k = \max\{k(G),k(G')\}$
as a parameter. Note that, 
$k(G)$ and $k(G')$ are not greatly different in most problems. In particular, it
is easy to prove that for all problems we deal with in this paper, $k(G')$ is at most
$k(G) + 1$.

\section{Dynamic Graph for Vertex Cover}\label{sec:VC}
Let $G=(V,E)$ be a graph.
{\sf Vertex Cover} is the problem of finding a minimum set of vertices that covers all edges.
Let $k=k(G)$ be the size of a minimum vertex cover of $G$.
The current known FPT algorithm solving \VC whose exponential function in $k$
is smallest is by Chen, Kanj and Xia~\cite{DBLP:journals/tcs/ChenKX10}, and
its running time is $O(|G|k + 1.2738^k)$.
Let us now state the main result of this section.
\begin{theorem}\label{thm:VC}
There is a data structure representing a graph $G$ which supports the following three operations.
\begin{enumerate}
\item Answers the solution for {\sf Vertex Cover} of the current graph $G$.
\item Add an edge to $G$.
\item Remove an edge from $G$.
\end{enumerate}
Let $k$ be the size of a minimum vertex cover of $G$.
Then the time complexity of an edge addition or removal is $O(k^2)$, and
 of a query is $O(f(k^2,k))$, where $f(|G|,k)$ is the time complexity of any
 static algorithm for \VC on a graph of size $|G|$.
\end{theorem}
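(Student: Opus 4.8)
The plan is to maintain, alongside the graph itself, a $2$-approximate vertex cover $C$ with the invariant $|C| \le 2k$, and to answer every query by running the given static algorithm on a small kernel extracted with the help of $C$. Concretely I would store for each vertex a membership flag for $C$, an adjacency structure, and a degree counter, all updatable in $O(1)$ time per edge operation. The first thing to establish is that $C$ lets us localize all the work to $O(k^2)$ vertices and edges: since $C$ is a vertex cover, $I = V \setminus C$ is independent and every $w \in I$ satisfies $d(w) \le |C| \le 2k$; consequently every vertex of degree exceeding $|C|$ lies in $C$ and, having degree larger than the size of a minimum vertex cover, must belong to every minimum vertex cover. Call this set $F$. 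By the standard forcing (Buss) rule, $VC(G) = VC(G - F) + |F|$, and after deleting $F$ every remaining vertex of $C$ has degree at most $|C| = O(k)$, so $G - F$ has at most $|C|\cdot|C| = O(k^2)$ edges. The kernel $K$ is $G - F$ with isolated vertices discarded; it has $O(k^2)$ vertices and edges and can be built in $O(k^2)$ time by scanning $C$ and listing the neighbors of each non-forced vertex of $C$.

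Second, I would show how to restore the invariant on $C$ after each update in $O(k^2)$ time, and here the trick is to recompute $C$ from scratch on the kernel rather than to repair it incrementally. On an edge deletion $C$ remains a vertex cover; on an insertion $\{u,v\}$ with $u,v \notin C$ I temporarily add $u$ to $C$, which keeps it a valid cover of size $O(k)$. Using this still-valid $C$ I build the kernel $K$ as above, compute a fresh $2$-approximate cover $C_K$ of $K$ in $O(k^2)$ time (e.g.\ by a greedy maximal matching over the $O(k^2)$ kernel edges), and set $C \leftarrow C_K \cup F$. Because $VC(K) = VC(G)-|F|$ and $|C_K| \le 2\,VC(K)$, the new cover satisfies $|C| \le 2\,VC(G) - |F| \le 2k$, and since $F$ covers all edges incident to $F$ while $C_K$ covers the rest, $C$ is again a vertex cover of $G$. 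This reestablishes the invariant and costs $O(k^2)$ per update.

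For a query I would build the kernel $K$ from the current $C$ in $O(k^2)$ time and run the static algorithm on it; since $|K| = O(k^2)$ and $VC(K) \le k$ this takes $O(f(k^2,k))$ time, and the exact answer for $G$ is recovered as $(\text{minimum vertex cover of } K) \cup F$ with $VC(G) = VC(K)+|F|$. The correctness of this recovery is exactly the forcing identity established above.

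The step I expect to be the main obstacle is the efficient maintenance of the approximate solution: a naive attempt to keep a maximal matching incrementally breaks down on deletions, since re-matching a freed endpoint may require scanning its whole (possibly huge) neighborhood, which is not $O(k^2)$. The resolution is precisely the recompute-on-the-kernel idea, but making it airtight requires the two facts that (i) the old $C$ remains a valid $O(k)$-size cover throughout the update, so the kernel is well defined and cheap to build, and (ii) forcing the high-degree set $F$ both bounds the kernel to $O(k^2)$ edges and preserves the vertex cover number additively. Care is also needed that the degree threshold used to define $F$ is at least $VC(G)$ (taking it equal to $|C|$ suffices) so that forcing is sound, while simultaneously guaranteeing $F \subseteq C$ so that $F$ can be found by scanning only the $O(k)$ vertices of $C$.
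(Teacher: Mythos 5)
Your proposal is correct and follows essentially the same route as the paper: maintain a 2-approximate cover, patch it to stay a valid cover after each edge modification, force the vertices of degree exceeding the cover's size (your $F$ is the paper's $X_0$), extract the $O(k^2)$-size kernel they leave behind, recompute the approximation greedily on that kernel for updates, and run the static exact algorithm on it for queries. The only cosmetic difference is that you phrase the kernel as $G-F$ minus isolated vertices and recover the answer via $VC(G)=VC(K)+|F|$, whereas the paper takes $V'=N[X\setminus X_0]\setminus X_0$ and returns $X_0\cup Y$ directly; these are the same construction.
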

Note that the update time is polynomial in $k$, and the exponential term in $k$ of the query
time is same to the one of the static algorithm.

Our dynamic data structure is simply represented as a pair of the graph
$G=(V,E)$ itself and a 2-approximate solution $X\subseteq V$ for \VC of $G$,
that is, we maintain a vertex set $X$ such that $X$ is a vertex cover of $G$ and
$|X|\leq 2k(G)$.

For both query and update, we compute a problem kernel.
To do this, we exploit the fact that we already know rather small vertex cover
$X$.
When an edge $\{u,v\}$ is added to $G$, we add $u$ to $X$ making $X$ a vertex
cover and use Algorithm~\ref{alg:VC} to
compute a new 2-approximate solution $X'$ of $G$.
When an edge is removed from $G$, we also use Algorithm~\ref{alg:VC} to compute
a new 2-approximate solution.
\begin{algorithm}[tb]
\caption{compute a 2-approximate solution}
\label{alg:VC}
\begin{algorithmic}[1]
\State $X_0 := \emptyset$				
\State $V' := \emptyset$				
\ForAll{$x$ in $X$}						
	\If{$d(x) > |X|$}					
		{$X_0 := X_0\cup \{x\}$}	
    \Else								
        {\enspace $V' := V'\cup N[x]$}	
    \EndIf								
\EndFor									
\State $V' := V'\setminus X_0$			
\State $Y := $ 2-approximate solution for {\sf Vertex Cover} of $G[V']$
\label{lne:vc:compute}
\State $X' := X_0\cup Y$				
\end{algorithmic}
\end{algorithm}

\begin{lemma}\label{lem:VC}
Algorithm~\ref{alg:VC} computes a 2-approximate solution $X'$ in $O(k^2)$ time,
where $k=k(G)$.
\end{lemma}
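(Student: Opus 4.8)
The lemma bundles two claims: that the returned set $X'$ is a genuine $2$-approximate vertex cover of $G$, i.e. $X'$ covers all edges and $|X'|\le 2k$, and that the procedure runs in $O(k^2)$ time. I would establish the size bound first, since it drives the whole design, then verify the covering property, and finally the time complexity. Throughout I will use that the input set $X$ is a vertex cover of $G$, so $|X|\ge k$, and that $|X|=O(k)$, since before the call $X$ is a $2$-approximate cover and at most one vertex is inserted.

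For the size bound, the key observation is that $X_0$ collects only \emph{forced} vertices. Every $x\in X_0$ satisfies $d(x)>|X|\ge k$, and a standard argument shows that any vertex of degree exceeding $k$ must lie in every minimum vertex cover (otherwise its more-than-$k$ neighbours would all have to be taken). Hence $X_0\subseteq X^*$ for every optimum $X^*$, giving $|X_0|\le k$. Next I would bound the cover computed on the kernel. Fix an optimum $X^*\supseteq X_0$ with $|X^*|=k$. Since every edge of $G[V']$ has both endpoints in $V'$ and $V'\cap X_0=\emptyset$, the endpoint of such an edge that lies in $X^*$ actually lies in $X^*\cap V'$; thus $X^*\cap V'$ is a vertex cover of $G[V']$ of size at most $k-|X_0|$, so the minimum vertex cover of $G[V']$ has size at most $k-|X_0|$. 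Because $Y$ is $2$-approximate, $|Y|\le 2(k-|X_0|)$, and since $Y\subseteq V'$ is disjoint from $X_0$, we conclude $|X'|=|X_0|+|Y|\le |X_0|+2(k-|X_0|)=2k-|X_0|\le 2k$.

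To see that $X'$ covers every edge $\{a,b\}$, note $X$ covers it, so some endpoint, say $a$, is in $X$. If $a\in X_0$ we are done. Otherwise $a$ is a low-degree vertex and $N[a]\subseteq V'$ before the removal on line~10; if $b\in X_0$ we are done, and if not then both $a,b$ survive into $V'$, so $\{a,b\}$ is an edge of $G[V']$ and is covered by $Y$. For the running time, the loop runs $|X|=O(k)$ times; each low-degree vertex contributes its closed neighbourhood of size $d(x)+1\le |X|+1=O(k)$ to $V'$, so the loop costs $O(k^2)$ and $|V'|=O(k^2)$. Moreover every edge of $G[V']$ is incident to a low-degree vertex of $X$, of which there are $O(k)$ each of degree $O(k)$, so $G[V']$ has $O(k^2)$ edges; computing a $2$-approximation on it, e.g. via a maximal matching, is linear in its size, hence $O(k^2)$. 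Realizing $V'$ as a set with a vertex-indexed boolean array keeps all insertions, membership tests, and the removal of $X_0$ within this budget, for a total of $O(k^2)$.

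The main obstacle is the size bound rather than the timing: one has to pin down the exact threshold so that the vertices diverted to $X_0$ are provably forced, which needs $|X|\ge k$ and is precisely why the approximation ratio $2$ on the residual kernel suffices, and then argue that projecting an optimum onto $V'$ loses nothing beyond $X_0$. Once the two inequalities $|X_0|\le k$ and $k(G[V'])\le k-|X_0|$ are in place, the telescoping $|X_0|+2(k-|X_0|)\le 2k$ closes the argument.
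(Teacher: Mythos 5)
Your proof is correct and follows essentially the same route as the paper's: the forced vertices $X_0$ lie in every optimum, the optimum projected onto $V'$ bounds the kernel's optimum by $k-|X_0|$, $|Y|\le 2(k-|X_0|)$ gives $|X'|\le 2k$, and the covering and $O(k^2)$ size/time bounds are argued identically. The only (immaterial) differences are that you compare degrees to $k$ via $|X|\ge k$ rather than to $|X|$ via $|X^*|\le |X|$, and you skip the paper's observation that $X^*\setminus X_0$ is exactly a minimum vertex cover of $G[V']$, which is not needed for the approximation bound anyway.
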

\begin{proof}
Let $X^*$ be a minimum vertex cover of the updated graph.
We have $|X^*| \leq |X|$.
If $x\notin X^*$ for some vertex $x\in X_0$, $N(x)$ must be contained in $X^*$.
Thus it holds that $|X^*|\geq |N(x)|=d(x) > |X|$, which is a contradiction.
Therefore, it holds that $X_0\subseteq X^*$.
At line \ref{lne:vc:compute} of Algorithm \ref{alg:VC}, $V'$ equals to $N[X\setminus
X_0]\setminus X_0$.
Thus we have:
\begin{itemize}
\item[(1)] $X^*\setminus X_0$ is a vertex cover of $G[V']$ because $X_0\cap
V'=\emptyset$ and $X^*$ is a vertex cover of $G$, and
\item[(2)] any vertex cover of $G[V']$ together with $X_0$ covers all edges in
$G$ because all edges not in $G[V']$ are covered by $X_0$.
\end{itemize}
Putting (1) and (2) together, we can prove that $X^*\setminus X_0$ is a
minimum vertex cover of $G[V']$.

Since $Y$ is a 2-approximate solution on $G[V']$ and $X^*\setminus X_0$ is a minimum vertex cover of $G[V']$,
we have $|Y|\leq 2|X^*\setminus X_0|$.
From (2), $X'=X_0\cup Y$ is a vertex cover of $G$.
Thus $X'$ is a 2-approximate solution because $|X'|=|X_0|+|Y|\leq
|X_0| + 2|X^*\setminus X_0| \leq 2|X^*|$.

The size of $X$ is at most $2k+1$, and thus the size of $V'$ at line~\ref{lne:vc:compute}
is $O(k^2)$. Moreover, the number of edges in $G[V']$ is $O(k^2)$ because for
each edge in $G[V']$, at least one endpoint lies on
$X\setminus X_0$ and the degree of any vertex $x$ in $X\setminus X_0$ is at most
$|X|$.
A 2-approximate solution can be computed
in linear time using a simple greedy algorithm~\cite{gary1979computers}.
Thus the total time complexity is $O(k^2)$.
\end{proof}

To answer a query, we use almost the same algorithm as Algorithm~\ref{alg:VC}, but
compute an exact solution at line~\ref{lne:vc:compute} instead of an approximate
solution.
The validity of the algorithm can be proved by almost the same argument.
The bottleneck part is to compute an exact vertex cover of the graph $G[V']$.
Since the size of the solution is at most $k$, we can obtain
the solution in $O(f(k^2,k))$ time where $f(|G|,k(G))$ is the time complexity of
any static algorithm for \VC on a graph of size $|G|$. For example, using the algorithm
in~\cite{DBLP:journals/tcs/ChenKX10}, we can compute the solution in $O(k^3 +
1.2738^k)$ time. We have finished the proof of Theorem~\ref{thm:VC}.

\section{Dynamic Graph for Cluster Vertex Deletion}\label{sec:CVD}
\subsection{Problem Definition and Time Complexity}\label{sec:CVD_prob_time}
A graph is called \emph{cluster graph} if every its connected component is a
clique, or equivalently, it contains no \emph{induced} path with three vertices
($P_3$).
Each maximal clique in a cluster graph is called a \emph{cluster}.
Given a graph, a subset of its vertices is called a \emph{cluster vertex
deletion} if its removal makes the graph a cluster graph. \CVD is the problem to
find a minimum cluster vertex deletion. We call the size of a minimum cluster
vertex deletion as a \emph{cluster vertex deletion number} or a \emph{cvd number} in
short.

There is a trivial algorithm to find a 3-approximate solution for \CVD with time
complexity $O(|E||V|)$~\cite{DBLP:journals/mst/HuffnerKMN10}.
The algorithm greedily finds $P_3$ and adds all the
vertices on the path to the solution until we obtain a cluster graph.
According to~\cite{DBLP:journals/mst/HuffnerKMN10}, it is still open
whether it is possible to improve the trivial algorithm or not.

Let us now state the main result of this section.

\begin{theorem}\label{thm:CVD}
There is a data structure representing a graph $G$ which supports the following three operations.
\begin{enumerate}
\item Answers the solution for {\sf Cluster Vertex Deletion} of the current graph $G$.
\item Add an edge to $G$.
\item Remove an edge from $G$.
\end{enumerate}
Let $k$ be the cvd number of $G$.
Then the time complexity of an edge addition or removal is $O(k^8+k^2\log |V|)$, and
 of a query is $O(f(k^5,k))$, where $f(|G|,k)$ is the time complexity of any
 static algorithm for \CVD on a graph of size $|G|$.
\end{theorem}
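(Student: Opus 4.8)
The plan is to mirror the strategy behind Theorem~\ref{thm:VC}: maintain a constant-factor approximate solution together with just enough auxiliary structure to rebuild a small kernel after every edit, then answer a query by running a static exact algorithm on that kernel. Concretely, I would store the graph $G$, a $3$-approximate cluster vertex deletion $X$ (so $X$ makes $G-X$ a cluster graph and $|X|\leq 3k(G)$, obtainable by the greedy $P_3$-hitting algorithm), and the partition of $G-X$ into its clusters. Since clusters merge and split as edges are inserted and deleted, I would represent each cluster as a persistent balanced search tree keyed by vertex; this is exactly where the $\log|V|$ factor enters, since it lets us split, merge, and look up cluster sets in $O(\log|V|)$ time while duplicating a set reference in $O(1)$.

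The update operation is the \CVD analog of Algorithm~\ref{alg:VC}, with the high-degree rule replaced by a \emph{many-clusters} rule: if a vertex $x\in X$ is adjacent to vertices in more than $|X|+1$ distinct clusters of $G-X$, then $x$ lies in every optimal cluster vertex deletion $X^*$. The reason is that if $x\notin X^*$ then $N(x)\setminus X^*$ is a clique in $G-X^*$, so all surviving neighbors of $x$ lying outside $X$ belong to a single cluster of $G-X$; hence $X^*$ must delete a neighbor of $x$ from all but one of the clusters $x$ touches, which already costs more than $k$. Collecting these forced vertices into $X_0$, I would argue, exactly as in Lemma~\ref{lem:VC}, that $X_0\subseteq X^*$ and that a minimum cluster vertex deletion of $G-X_0$ equals $X^*\setminus X_0$, because after deleting $X_0$ the graph splits into the clusters touched by $X\setminus X_0$ and a collection of already-clean cliques that cost nothing.

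It then remains to shrink the touched region to polynomial size. Each of the $O(k)$ vertices of $X\setminus X_0$ touches at most $|X|+1=O(k)$ clusters, so only $O(k^2)$ clusters are relevant, yet a single cluster may still be huge, so I would reduce each relevant cluster $C$ to a bounded set of representatives: vertices of $C$ with no neighbor in $X$ are mutually interchangeable, so $k+2$ of them suffice, while boundary vertices are grouped by their adjacency pattern to $X\setminus X_0$ and only $O(k)$ representatives are kept per group since the optimum deletes at most $k$ vertices. Counting the surviving representatives over all relevant clusters is what yields a kernel with $O(k^5)$ vertices; a query runs the static exact algorithm on this kernel in $O(f(k^5,k))$ time, while an update recomputes a $3$-approximation on the reduced region with the $O(|E||V|)$ greedy algorithm, whose cost together with the $O(\log|V|)$ per-operation persistent-set bookkeeping gives the claimed $O(k^8+k^2\log|V|)$ update time.

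The step I expect to be the main obstacle is bounding the number of relevant adjacency patterns per cluster: naively there are $2^{|X|}=2^{\Theta(k)}$ possible neighborhoods in $X$, so a polynomial kernel cannot treat each neighborhood type separately and must instead exploit the $P_3$-structure. The key structural fact I would lean on is the same one used for forcing, namely that keeping a vertex $x$ out of the solution collapses all of its surviving outside neighbors into a single cluster, which sharply limits how finely the boundary of a relevant cluster can be split by the optimum. Turning this into a clean polynomial bound that simultaneously admits efficient incremental maintenance under the persistent-set representation is the delicate part, and is what separates this construction from the straightforward vertex cover argument.
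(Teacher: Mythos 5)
Your overall architecture matches the paper's: maintain a $3$-approximate solution $X$ with persistent set structures for the clusters, force into a set $X_0$ every $x\in X$ adjacent to more than $|X|+1$ clusters (your justification for this rule is essentially the paper's Lemma~\ref{lem:CVD_left}), build a kernel from the remaining touched region, run the greedy $3$-approximation on it for updates and a static exact algorithm for queries. The time accounting ($O(k^5)$ edges in the kernel, $O(|E||V|)=O(k^8)$ for the greedy approximation, $O(k^2\log|V|)$ for persistent-set bookkeeping) also lines up.

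However, there is a genuine gap exactly where you flag one: the reduction of each relevant cluster to polynomially many representatives. Your proposal groups boundary vertices of a cluster by their adjacency pattern to $X\setminus X_0$ and keeps $O(k)$ representatives per group, which as you note gives $2^{\Theta(k)}$ groups and hence no polynomial kernel; you do not supply the idea that repairs this. The paper's resolution is to abandon grouping by full adjacency pattern entirely: for each \emph{single} pair $(x,l)$ with $x\in X\setminus X_0$ and $l\in L_x$, it keeps up to $|X|+1$ arbitrary neighbors of $x$ in $C_l$ (from $P^+_{x,l}$) and up to $|X|+1$ arbitrary non-neighbors (from $P^-_{x,l}$), for a total of $O(k^3)$ kernel vertices. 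Correctness of this much coarser selection is not a consequence of the ``collapse into a single cluster'' fact you lean on; it is established by the four-case induced-$P_3$ analysis of Lemma~\ref{lem:CVD_right} (paths of type $xuy$, $xyu$, $xuv$, $uxv$), where a counting argument over the sets $A=\{w:\ xw\in E,\ yw\notin E\}$, $B$, $C$ inside $V'\cap C_{l_u}$ shows that any small solution of the kernel already kills all such paths in the full graph. Without this lemma (or an equivalent), your construction either has exponential size or lacks a proof that a solution of the kernel lifts to a solution of $G$, so the claimed $O(f(k^5,k))$ query time and $O(k^8)$ update term are not yet justified.
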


As the static algorithm, we can use an $O(2^kk^9+|V||E|)$-time algorithm by
H{\"u}ffner, Komusiewicz, Moser, and Niedermeier~\cite{DBLP:journals/mst/HuffnerKMN10} 
or an $O(1.9102^k|G|)$-time algorithm by Boral, Cygan, Kociumaka, and Pilipczuk~\cite{DBLP:journals/corr/BoralCKP13}.

\subsection{Data Structure}
\begin{table}[tb]
\centering
\begin{tabular}{|c|l|}
\hline
$X$ & 3-approximate solution \\ 
\hline
$C_l \text{ for each cluster label } l$ & the vertices in the cluster labeled
$l$
\\
\hline
$l_u \text{ for each } u\in V\setminus X$ & label of the cluster that
$u$ belongs to \\
\hline
$L_x \text{ for each } x\in X$ & $\{l_u \mid u\in N(x)\setminus X\}$ \\
\hline
$P^+_{x,l} \text{ for each } x\in X \text{ and } l\in L_x$ & $C_l\cap N(x)$ \\
\hline
$P^-_{x,l} \text{ for each } x\in X \text{ and } l\in L_x$ & $C_l\setminus N(x)$ \\
\hline
\end{tabular}
\caption{Variables maintained in the algorithm}
\label{tbl:CVD}
\end{table}

\begin{figure}[tb]
 \begin{center}
  \includegraphics[height=40mm]{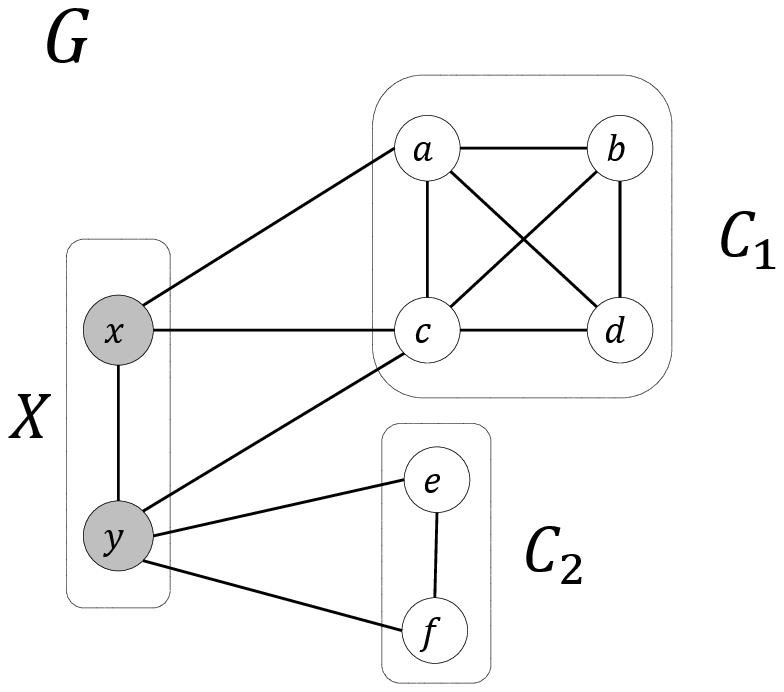}
 \end{center}
 \caption{An example of a graph and a 3-approximate solution.}
 \label{fig:CVD}
\end{figure}
We dynamically maintain the information listed in Table~\ref{tbl:CVD}.
We always keep a 3-approximate solution $X$.
Each cluster in $G[V\setminus X]$ is assigned a distinct \emph{cluster label}.
For each cluster label $l$, $C_l$ is the set of vertices on the cluster having the label $l$.
We keep the vertex set $C_l$ by using a persistent data structure that supports an
update in $O(\log |C_l|)$ time. One of such data structures is a persistent
red-black tree developed by Driscoll, Sarnak, Sleator and
Tarjan~\cite{DBLP:journals/jcss/DriscollSST89}.
The reason why the persistent data structure is
employed is that it enables us to copy the set in constant time.
For each $u\in V\setminus X$, $l_u$ is a label of the cluster $u$
belongs to.
For a vertex $x$ and a cluster, we say that $x$ is incident to the cluster if
at least one vertex in the cluster is incident to $x$.
For each $x\in X$, $L_x=\{l_u \mid u\in N(x)\setminus X\}$ is the 
labels of the clusters that $x$ is incident to.
For each $x\in X$ and $l\in L_x$, $P^+_{x,l} = C_l\cap N(x)$ is the set of the
neighbors of $x$ in $C_l$ and $P^-_{x,l} = C_l\setminus N(x)$ is the set of the
non-neighbors of $x$ in $C_l$. 
Note that all the variables are uniquely determined when $G$, $X$ and the
labels for all clusters are fixed.

For example, look at the graph depicted in Fig.~\ref{fig:CVD}.
$X=\{x,y\}$ is a 3-approximate solution, and $C_1$ and $C_2$ are clusters.
Here, the set of cluster labels is $\{1,2\}$, $l_a = l_b = l_c = l_d = 1$ and
 $l_e = l_f = 2$. $L_x = \{1\}$ and $L_y = \{1,2\}$. 
 $P^+_{x,1} = \{a,c\}$, $P^-_{x,1} = \{b,d\}$,
 $P^+_{y,1} = \{c\}$,   $P^-_{y,1} = \{a,b,d\}$,
 $P^+_{y,2} = \{e,f\}$ and $P^-_{y,2} = \{\}$.

\subsection{Algorithm}
\subsubsection{Update}
Let us explain how to update the data structure when an edge is added or
removed.
Before describing the whole algorithm, let us explain subroutines used in
the algorithm.
Algorithm~\ref{alg:CVD:add} is used to add a vertex $u$ in $V\setminus
X$ to $X$, and Algorithm~\ref{alg:CVD:remove} is to remove a
vertex $y$ from $X$ under the condition that $X\setminus\{y\}$ is still a
cluster vertex deletion.
Given a cluster vertex deletion $X$, Algorithm~\ref{alg:CVD:compress} computes
a 3-approximate solution $X'$.

\begin{algorithm}[tb] 
\caption{add $u\in V\setminus X$ to $X$}
\label{alg:CVD:add}
\begin{algorithmic}[1]
\State $l := l_u$ \label{lne:cvd:add:l}
\State remove $u$ from $C_l$ \label{lne:cvd:add:C}
\ForAll{any $x\in X$ such that $l\in L_x$} \label{lne:cvd:add:loop}
	\If{$\{x,u\}\in E$} \label{lne:cvd:add:ifE}
		\State remove $u$ from $P^+_{x,l}$ \label{lne:cvd:add:remPP}
		\State If $P^+_{x,l}$ becomes empty, remove $l$ from $L_x$
		\label{lne:cvd:add:remL}
	\Else
		\State remove $u$ from $P^-_{x,l}$ \label{lne:cvd:add:remPM}
	\EndIf
\EndFor
\State add $u$ to $X$ \label{lne:cvd:add:X}
\If{$C_l$ is still not empty} \label{lne:cvd:add:ifC}
	\State $L_u := \{l\}$  \label{lne:cvd:add:L}
	\State copy $C_l$ into $P^+_{u,l}$ \label{lne:cvd:add:copy}
	\State $P^-_{u,l} := \emptyset$ \label{lne:cvd:add:PMu}
\Else
	\State $L_u := \emptyset$
\EndIf \label{lne:cvd:add:endC}
\end{algorithmic}
\end{algorithm}

\begin{lemma}\label{lem:CVD:add}
Algorithm~\ref{alg:CVD:add} adds a vertex $u$ to $X$ and  
updates the data structure correctly in $O(|X|\log n)$ time.
\end{lemma}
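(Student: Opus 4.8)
The plan is to establish two things separately: that upon termination every variable in Table~\ref{tbl:CVD} again satisfies its defining invariant with respect to the new deletion set $X\cup\{u\}$, and that the running time is $O(|X|\log n)$. The whole argument rests on one structural observation about the maintained invariant that $X$ is a cluster vertex deletion: in the cluster graph $G[V\setminus X]$ every connected component is a maximal clique, so the cluster $C_l$ containing $u$ is exactly the connected component of $u$, and hence the only neighbors of $u$ inside $V\setminus X$ are the vertices of $C_l\setminus\{u\}$, each of which is adjacent to $u$. I would record this fact first, since it is what lets the algorithm touch only the data associated with the single label $l$.

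For correctness I would first note that $X\cup\{u\}$ is again a cluster vertex deletion: deleting one more vertex from a cluster graph leaves a cluster graph, and in fact only shrinks cluster $l$, so all other clusters and their labels are unchanged. Consequently the only $C_{l'}$ that changes is $C_l$ (line~\ref{lne:cvd:add:C}), and for an old $x\in X$ the only pair $(P^+_{x,l'},P^-_{x,l'})$ that can change is the one with $l'=l$, because $u\notin C_{l'}$ for $l'\neq l$. I would then verify the loop: for each $x$ with $l\in L_x$, the vertex $u$ lies in exactly one of $P^+_{x,l}=C_l\cap N(x)$ or $P^-_{x,l}=C_l\setminus N(x)$ according to whether $\{x,u\}\in E$, so removing it from the correct set (lines~\ref{lne:cvd:add:remPP} and~\ref{lne:cvd:add:remPM}) restores these two invariants; and $x$ loses incidence to cluster $l$ exactly when no neighbor of $x$ remains in the shrunken $C_l$, i.e.\ when $P^+_{x,l}$ becomes empty, which is precisely the test that removes $l$ from $L_x$ (line~\ref{lne:cvd:add:remL}). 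Finally, for the new element $u$ I would invoke the structural observation: after the deletion the only cluster $u$ is incident to is $l$, and only if $C_l$ is still nonempty, and since $u$ was adjacent to all of $C_l\setminus\{u\}$ we have $P^+_{u,l}=C_l$ and $P^-_{u,l}=\emptyset$, matching lines~\ref{lne:cvd:add:L}--\ref{lne:cvd:add:PMu}; the degenerate case is handled by setting $L_u:=\emptyset$.

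For the time bound I would charge each operation to a persistent-tree access. Removing $u$ from $C_l$, testing or removing membership in a $P^\pm_{x,l}$, and deleting $l$ from $L_x$ each cost $O(\log n)$. The loop ranges over the at most $|X|$ vertices $x\in X$ (found either by a reverse index keyed on $l$, or by scanning $X$ and testing $l\in L_x$), so it contributes $O(|X|\log n)$. The concluding block runs once: $L_u:=\{l\}$ and $P^-_{u,l}:=\emptyset$ are $O(1)$, and crucially copying $C_l$ into $P^+_{u,l}$ (line~\ref{lne:cvd:add:copy}) is $O(1)$ because $C_l$ is stored in a persistent structure that can be duplicated in constant time, which is exactly why persistence was chosen. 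Summing yields $O(|X|\log n)$.

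I expect the main obstacle to be the correctness bookkeeping rather than the timing: one must argue cleanly that no label other than $l$ is affected and that $u$'s post-insertion incidence set is exactly $C_l$, both of which hinge on the clique/connected-component structure of $G[V\setminus X]$, and one must not overlook the case where $u$ was the sole vertex of its cluster, so that $C_l$ becomes empty and $u$ ends up incident to no cluster. Everything else is a routine per-variable check against Table~\ref{tbl:CVD}.
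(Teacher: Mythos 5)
Your proposal is correct and follows essentially the same argument as the paper's proof: a per-variable check that only the data for cluster $l$ changes, using the fact that $C_l\cup\{u\}$ is a clique to justify $P^+_{u,l}=C_l$ and $P^-_{u,l}=\emptyset$, and charging the $O(|X|)$ persistent-tree removals at $O(\log n)$ each with the $O(1)$ persistent copy. No substantive differences.
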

\begin{proof}
At line~\ref{lne:cvd:add:l}, $l$ is the label of the cluster that the vertex $u$ belongs to.
By removing $u$ from $C_l$ at line~\ref{lne:cvd:add:C}, $C_l$ is correctly updated.

At line~\ref{lne:cvd:add:loop}, we iterate over all $x\in X$ such that $x$ is incident to the cluster $C_l$.
For the other vertices in $X$, since all clusters
except $C_l$ are not changed, we need no updates.
If there is an edge between $x$ and $u$, 
$u$ is in $P^+_{x,l}$.
Thus we remove $u$ from the set to correctly update
$P^+_{x,l}$ (line~\ref{lne:cvd:add:remPP}). If $P^+_{x,l}$ becomes empty by the
operation, it means that $C_l$ is no longer incident to $x$, and thus we remove $l$
from $L_x$ (line~\ref{lne:cvd:add:remL}).
If there is no edge between $x$ and $u$, $u$ is in $P^-_{x,l}$.
Thus we remove $u$ from the set to correctly update $P^-_{x,l}$
(line~\ref{lne:cvd:add:remPM}).
At line~\ref{lne:cvd:add:X}, we add $u$ to $X$ and complete the update of $X$.

Finally we initialize $L$, $P^+$ and $P^-$ for $u$ (line~\ref{lne:cvd:add:ifC} to~\ref{lne:cvd:add:endC}).
If $C_l$ is now empty, $u$ is not incident to any cluster, and thus $L_u$ is initialized as an empty set. Otherwise, $u$
is incident only to the cluster $C_l$. Thus we initialize $L_u$ as $\{l\}$
(line~\ref{lne:cvd:add:L}).
Since $C_l\cup \{u\}$ was a cluster, $u$ is incident to every vertex in $C_l$.
Thus we initialize $P^+_{x,l}$ copying $C_l$, and initialize $P^-_{x,l}$ to be
empty set (line~\ref{lne:cvd:add:copy} and~\ref{lne:cvd:add:PMu}).
Using a persistent data structure, copying $C_l$ into $P^+_{x,l}$ can be
done in $O(1)$ time.

Removing a vertex from $P^+_{x,l}$ or $P^-_{x,l}$ (line~\ref{lne:cvd:add:remPP}
and~\ref{lne:cvd:add:remPM}) takes $O(\log n)$ time, and this part is repeated
at most $|X|$ times.
This is the dominant part of the algorithm. 
Thus the time complexity of the algorithm is $O(|X|\log n)$.
\end{proof}

\begin{algorithm}[tb] 
\caption{remove $y\in X$ from $X$ assuming $G[V\setminus (X\setminus \{y\})]$ is
a cluster graph}
\label{alg:CVD:remove}
\begin{algorithmic}[1]
\State remove $y$ from $X$
\If {$L_y = \emptyset$} \label{lne:cvd:rem:IfLy}
	\State $l_y:=$ new label \label{lne:cvd:rem:}
	\State $C_{l_y}:=\{y\}$
\Else
	\State $|L_y|$ must be one. Let $l_y$ be the unique element in $L_y$.
	\State add $y$ to $C_{l_y}$
\EndIf \label{lne:cvd:rem:EndLy}
\State $l:=l_y$ \label{lne:cvd:rem:l}
\ForAll {$x\in X$ such that $l\in L_x$}
	\If {$\{x,y\}\in E$}
		{add $y$ to $P^+_{x,l}$}
	\Else
		{\enspace add $y$ to $P^-_{x,l}$}
	\EndIf
\EndFor
\ForAll {$x\in X$ such that $y\in N(x)$ and $l\notin L_x$}
	\State add $l$ to $L_x$
	\State $P^+_{x,l} := \{y\}$
	\State copy $C_l$ into $P^-_{x,l}$ and remove $y$ from $P^-_{x,l}$
	\label{lne:cvd:rem:copy}
\EndFor
\end{algorithmic}
\end{algorithm}
\begin{lemma}\label{lem:CVD:remove}
If $G[V\setminus (X\setminus \{y\})]$ is a cluster graph, 
Algorithm~\ref{alg:CVD:remove} removes a vertex $y$ from $X$ and
updates the data structure correctly in
$O(|X|\log n)$ time.
\end{lemma}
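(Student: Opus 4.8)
The plan is to first pin down the structural fact that makes the case split in Algorithm~\ref{alg:CVD:remove} well-defined, then check that every variable of Table~\ref{tbl:CVD} is restored to its prescribed value, and finally bound the running time. The key claim is that, under the hypothesis that $G[V\setminus(X\setminus\{y\})]$ is a cluster graph, the vertex $y$ is incident to \emph{at most one} cluster of $G[V\setminus X]$, so $|L_y|\le 1$. Indeed, if $y$ were adjacent to vertices in two distinct clusters, or to a proper nonempty subset of a single cluster $C_l$, one extracts an induced $P_3$: two vertices in different clusters are non-adjacent, while a neighbor and a non-neighbor of $y$ inside the same cluster are adjacent. Either case contradicts the hypothesis. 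Hence either $L_y=\emptyset$, and $y$ has no neighbor in $V\setminus X$ and forms its own singleton cluster, or $L_y=\{l\}$ and $y$ is adjacent to \emph{every} vertex of $C_l$, so $y$ joins $C_l$. This justifies lines~\ref{lne:cvd:rem:IfLy}--\ref{lne:cvd:rem:EndLy} and the assertion that $|L_y|$ must be one.

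Next I would confirm that, after moving $y$ from $X$ into the (new or existing) cluster with label $l=l_y$, each variable agrees with Table~\ref{tbl:CVD}. The sets $C_l$ and the label $l_y$ are set directly by the first block. For a solution vertex $x$ already incident to $C_l$ (i.e.\ $l\in L_x$), the only change is that the fresh cluster member $y$ must be filed under $P^+_{x,l}$ or $P^-_{x,l}$ according to whether $\{x,y\}\in E$; this is the first loop, and $L_x$ needs no change. For a solution vertex $x$ with $y\in N(x)$ but $l\notin L_x$, observe that $x$ had no neighbor in the old $C_l$ (otherwise $l\in L_x$), so in the updated cluster its unique neighbor is $y$; hence $l$ must be inserted into $L_x$ with $P^+_{x,l}=\{y\}$ and $P^-_{x,l}=C_l\setminus\{y\}$, which is exactly the second loop, the copy-then-delete on line~\ref{lne:cvd:rem:copy}. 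Finally I would note that every remaining $x\in X$ has both $l\notin L_x$ and $y\notin N(x)$, so it gains no neighbor in $C_l$ and requires no update; thus the two loops together touch precisely the affected entries.

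For the running time, each loop ranges over a subset of $X$ and performs $O(1)$ membership tests plus a constant number of insertions/deletions into a persistent set, each costing $O(\log n)$, while the copy on line~\ref{lne:cvd:rem:copy} is $O(1)$ by persistence. The vertices $x$ with $l\in L_x$ or with $y\in N(x)$ can be located in a single pass over $X$, so the whole procedure runs in $O(|X|\log n)$ time. The main obstacle is the structural claim $|L_y|\le 1$: everything downstream is bookkeeping, but the correctness of the singleton case split, and in particular the clean ``the only neighbor of $x$ in the new cluster is $y$'' observation that drives the second loop, rests entirely on the cluster-graph hypothesis, so that is where I would concentrate the argument.
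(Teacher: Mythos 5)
Your proof is correct and follows essentially the same route as the paper's: justify the case split on $|L_y|$ using the cluster-graph hypothesis, verify each entry of Table~\ref{tbl:CVD} via the same three-way case analysis on $x\in X$ (already incident to $C_l$; newly incident via $y$; untouched), and charge $O(\log n)$ per persistent-set operation over $O(|X|)$ iterations. You in fact give slightly more detail than the paper on the key structural point (the induced-$P_3$ extraction showing $|L_y|\le 1$ and that $y$ is adjacent to all of $C_l$), which the paper only asserts.
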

\begin{proof}
First, we remove $y$ from $X$ and complete the update of $X$.

From line~\ref{lne:cvd:rem:IfLy} to line~\ref{lne:cvd:rem:EndLy}, we compute
$l_y$, the label of the cluster that the vertex $y$ belongs to, and update $C_{l_y}$. 
Note that for any cluster label $l'\neq l_y$, $C_{l'}$ is not affected by the
removal of $y$.
If $L_y$ is empty, 
it means that there is not adjacent vertex of $y$ in $V\setminus X$, and thus
we create a new cluster label for the cluster $\{y\}$. Otherwise, from the
assumption that $G[V\setminus(X\setminus\{y\})]$ is a cluster graph, $y$ is adjacent to exactly one cluster. 
Let $l_y$ be the unique label in $L_y$, and then we add $y$ to $C_{l_y}$.

Let $l$ be $l_y$ for notational brevity (line~\ref{lne:cvd:rem:l}).
We update the values $L_x, P^+_{x,l}$ and $P^-_{x,l}$ for each $x\in X$.
Again, note that for any cluster label $l'\neq l$, $P^+_{x,l'}$ and $P^-_{x,l'}$
are not changed by the removal of $y$.
If $l\in L_x$, or equivalently $x$ is already adjacent to the cluster $C_l$
before $y$ is added to $C_l$, then we add $y$ to $P^+_{x,l}$ or $P^-_{x,l}$
according to whether $y$ is adjacent to $x$ or not.
If $l\notin L_x$, then $x$ is not adjacent to any vertex in $C_l$ before $y$ is
added to.
If $x$ is also not adjacent to $y$, no updates are needed.
If $x$ is adjacent to $y$, then $x$ is incident to the cluster $C_l$ after the
removal of $y$ from $X$. Thus we add $l$ to $L_x$, and initialize $P^+_{x,l}$ as
$\{y\}$ and $P^-_{x,l}$ as $C_l\setminus \{y\}$. 
To create $P^-_{x,l}$, we copy $C_l$ into $P^-_{x,l}$ and remove $y$ from
$P^-_{x,l}$. By using a persistent data structure, the update can be done in
$O(\log n)$ time.

It is easy to check that the time complexity of the algorithm is $O(|X|\log n)$.
\end{proof}

\begin{algorithm}[tb]
\caption{compute a new 3-approximate solution $X'$}
\label{alg:CVD:compress}
\begin{algorithmic}[1]
\State $V' := \emptyset$																	
\State $X_0 := \emptyset$																	
\ForAll {$x\in X$}																			
	\If {$|L_x| > |X| + 1$}																	
		\State add $x$ to $X_0$																
	\Else																					
		\State add $x$ to $V'$																
		\ForAll {$l\in L_x$}
			\State take $\min(|P^+_{x,l}|, |X| + 1)$ vertices from $P^+_{x,l}$, and
			add them to $V'$																
			\State take $\min(|P^-_{x,l}|, |X| + 1)$ vertices from $P^-_{x,l}$, and
			add them to $V'$																
		\EndFor																				
	\EndIf																					
\EndFor																					
\State $Y := $ 3-approximate cluster vertex deletion of $G[V']$ \label{lne:cvd:cmp:Y}	
\If{$|Y|>|X\setminus X_0|$}															%
	{$X' := X$}
\Else
	{\enspace $X' := X_0\cup Y$}																	
\EndIf
\end{algorithmic}
\end{algorithm}

\begin{lemma}\label{lem:CVD:compress}
Algorithm~\ref{alg:CVD:compress} computes a 3-approximate solution in $O(|X|^8)$ time.
\end{lemma}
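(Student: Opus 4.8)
The plan is to fix a minimum \CVD $X^{*}$ of the current graph $G$, write $k=|X^{*}|$, and compare the output $X'$ against $X^{*}$. First I would establish two structural facts about $X_0$. (i) $X_0\subseteq X^{*}$: if $x\in X_0$ then $x$ is incident to more than $|X|+1$ distinct clusters of the cluster graph $G[V\setminus X]$, so picking one neighbour of $x$ from each such cluster yields more than $|X|+1$ vertices that are pairwise non-adjacent (they lie in different clusters) yet all adjacent to $x$. Were $x\notin X^{*}$, the component of $x$ in $G[V\setminus X^{*}]$ would be a clique, so at most one of these neighbours could avoid $X^{*}$; hence $|X^{*}|\ge|X|+1$, contradicting $|X^{*}|=k\le|X|$. (ii) $X^{*}\setminus X_0$ is a \CVD of $G[V']$: since $V'$ contains no vertex of $X_0$, we have $G[V'\setminus(X^{*}\setminus X_0)]=G[V'\setminus X^{*}]$, an induced subgraph of the cluster graph $G[V\setminus X^{*}]$, hence itself a cluster graph. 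Consequently the optimum of $G[V']$ is at most $|X^{*}\setminus X_0|$.

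Next I would dispatch the two branches. When $|Y|>|X\setminus X_0|$, the algorithm keeps $X'=X$; here I would observe that $Y$ is a $3$-approximation of $G[V']$, so by (ii) $|X\setminus X_0|<|Y|\le 3|X^{*}\setminus X_0|$, and since $X_0\subseteq X^{*}$ this gives $|X|<3|X^{*}|-2|X_0|\le 3k$. As $X$ is a valid \CVD by hypothesis, $X'=X$ is then a valid $3$-approximation. When $|Y|\le|X\setminus X_0|$, the algorithm outputs $X'=X_0\cup Y$; the size bound is immediate, $|X'|=|X_0|+|Y|\le|X_0|+3|X^{*}\setminus X_0|\le 3k$, using (ii) and $X_0\subseteq X^{*}$. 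It remains to show validity, which is the crux.

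The main obstacle is proving that $X_0\cup Y$ is actually a \CVD of the whole graph $G$, i.e. that no $P_3$ survives in $G[V\setminus(X_0\cup Y)]$, even though $V'$ keeps only $|X|+1$ representatives per set $P^{+}_{x,l},P^{-}_{x,l}$. Suppose a $P_3$ $a$–$b$–$c$ survives. Since $G[V\setminus X]$ is a cluster graph, at least one of $a,b,c$ lies in $X\setminus X_0\subseteq V'$, and I would split into cases according to which of the three vertices lie in $X\setminus X_0$ and which are cluster vertices. The lifting idea is that any two kept representatives lying in the same cluster are adjacent, so I can replace a cluster vertex of the $P_3$ by a representative drawn from the appropriate $P^{+}$ or $P^{-}$ set while preserving the needed adjacencies; when a single replacement might accidentally be adjacent (resp. non-adjacent) to the other endpoint, I would pull a second representative from the complementary set and argue by a short adjacent/non-adjacent case split that one of the resulting triples is a $P_3$ inside $V'$. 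Crucially, in this branch $|Y|\le|X\setminus X_0|\le|X|$, so each relevant set is either copied into $V'$ in full (and then the original, un-deleted vertex is available) or contributes $|X|+1>|Y|$ representatives, guaranteeing one that avoids $Y$. Any such triple is a $P_3$ in $G[V']\setminus Y$, contradicting that $Y$ is a \CVD of $G[V']$.

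Finally, for the running time I would bound $|V'|=O(|X|^{3})$ (at most $|X|$ vertices of $X$, and for each of the $\le|X|$ chosen $x$ and its $\le|X|+1$ labels, $2(|X|+1)$ representatives) and the number of edges of $G[V']$ by $O(|X|^{5})$: the $X$–$X$ edges number $O(|X|^{2})$, the $X$–cluster edges $O(|X|^{4})$, and the clique edges inside the clusters sum to $O(|X|^{5})$, since a single cluster contributes at most $O(|X|^{2})$ representatives while their total is $O(|X|^{3})$. Computing the greedy $3$-approximation $Y$ on $G[V']$ then costs $O(|E'|\,|V'|)=O(|X|^{8})$, which dominates the extraction of representatives from the persistent structures and yields the claimed bound. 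I expect the structural lifting of step three to be the delicate part, whereas the size and running-time estimates are routine once the vertex and edge counts above are in place.
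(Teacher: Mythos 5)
Your proposal is correct and follows essentially the same route as the paper, which factors the argument into the same two facts (the paper's Lemma~\ref{lem:CVD_left}: $X_0\subseteq X^*$ and $X^*\setminus X_0$ is a solution of the kernel; and Lemma~\ref{lem:CVD_right}: a small solution of $G[V']$ plus $X_0$ is a solution of $G$) and the same $O(|X|^3)$-vertex, $O(|X|^5)$-edge, $O(|E||V|)$-greedy accounting. The only difference is cosmetic: in the two hard $P_3$ shapes ($xuy$ and $xyu$) the paper argues by counting that $S$ would have to contain one of two sets of size at least $|X|+1$, whereas you constructively pick one surviving representative from each of the two relevant $P^{\pm}$ sets and case-split on their adjacencies to exhibit a $P_3$ in $G[V'\setminus Y]$ --- the same argument read in the contrapositive direction.
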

In order to prove Lemma~\ref{lem:CVD:compress}, let us prove Lemma~\ref{lem:CVD_right}
and~\ref{lem:CVD_left}.

\begin{lemma}\label{lem:CVD_right}
Let $V'$ and $X_0$ be the sets computed by Algorithm~\ref{alg:CVD:compress}.
If $S\subseteq V'$ is a cluster vertex deletion of $G[V']$ such that $|S|\leq
|X\setminus X_0|$, then $S\cup X_0$ is a cluster vertex deletion of $G$.
\end{lemma}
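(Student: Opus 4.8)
The plan is to argue by contradiction. Suppose $S\cup X_0$ is not a cluster vertex deletion of $G$; then $G-(S\cup X_0)$ contains an induced $P_3$ on three vertices, none of which lie in $S\cup X_0$. Since $X$ is a cluster vertex deletion, $G[V\setminus X]$ is a cluster graph and hence contains no induced $P_3$; therefore not all three vertices lie in $V\setminus X$, so at least one lies in $X$ and, avoiding $X_0$, in $X\setminus X_0$. Writing $W=(X\setminus X_0)\setminus S$ and $U=(V\setminus X)\setminus S$, the three vertices lie in $W\cup U$ with at least one in $W$. Recall $X\setminus X_0\subseteq V'$ and $S\subseteq V'$, so every vertex of $W$ already lies in $V'\setminus S$. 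The goal is to transform this $P_3$ into an induced $P_3$ living entirely in $G[V'\setminus S]$, contradicting the hypothesis that $S$ is a cluster vertex deletion of $G[V']$.

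The technical engine is a \emph{survivor claim}: for every $x\in X\setminus X_0$, every $l\in L_x$, and every $u\in P^+_{x,l}$ (resp.\ $u\in P^-_{x,l}$) with $u\notin S$, either $u\in V'\setminus S$, or $P^+_{x,l}\cap(V'\setminus S)\neq\emptyset$ (resp.\ $P^-_{x,l}\cap(V'\setminus S)\neq\emptyset$). Indeed, if $u\notin V'$ then Algorithm~\ref{alg:CVD:compress} did not copy all of $P^+_{x,l}$ into $V'$, so $|P^+_{x,l}|>|X|+1$ and exactly $|X|+1$ of its vertices were added to $V'$; since $|S|\le|X\setminus X_0|\le|X|<|X|+1$, at least one of those avoids $S$. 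Thus every relevant $P^+_{x,l}$ or $P^-_{x,l}$ has a representative in $V'\setminus S$, and a cluster vertex of the original $P_3$ that already lies in $V'$ may be reused verbatim.

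I would then split on the number and positions of the $W$-vertices in the $P_3$. If all three lie in $W$, the $P_3$ already lives in $G[V'\setminus S]$. If exactly one vertex $x$ lies in $W$, the other two are cluster vertices not in $S$; whether $x$ is the middle (two neighbors in distinct clusters) or an endpoint (a neighbor and a non-neighbor in one cluster), the survivor claim replaces each cluster vertex by a representative in the appropriate $P^+_{x,l}$ or $P^-_{x,l}$, and same-cluster adjacency together with distinct-cluster non-adjacency makes the rerouted triple an induced $P_3$. The main obstacle is the case of exactly two $W$-vertices $x,x'$ and one cluster vertex $u$: here one would want a single surviving cluster vertex adjacent to both $x$ and $x'$ (when $u$ is the middle) or adjacent to $x$ but not to $x'$ (when $u$ is the far endpoint), and the per-set sampling does not directly guarantee such a vertex once $u\notin V'$. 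I would resolve this by taking two \emph{independent} survivors, one in $P^+_{x,l}$ and one in the set attached to $x'$ (either $P^+_{x',l}$ or $P^-_{x',l}$). If either survivor already meets both constraints we are done; otherwise the two survivors lie in the same cluster $C_l$ (so they are adjacent) but on opposite sides with respect to $x$ and $x'$, and combining them with the suitable one of $x,x'$ yields an induced $P_3$ of a different shape, namely $x-u_1-u_2$ or $x'-u_1-u_2$ with the cluster vertices as the middle and the far endpoint. Verifying in each case that the three chosen vertices are distinct and that exactly the two required edges are present is routine, completing the contradiction.
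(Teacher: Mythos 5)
Your proposal is correct and follows essentially the same route as the paper's proof: assume an induced $P_3$ survives in $G-(S\cup X_0)$, observe it must use a vertex of $X\setminus X_0\subseteq V'$, do a case analysis on its shape, and use the fact that $\min(|P^{\pm}_{x,l}|,|X|+1)$ sampled vertices versus $|S|\le |X|$ guarantees survivors in $V'\setminus S$, producing an induced $P_3$ inside $G[V'\setminus S]$. The only cosmetic difference is in the cases with two vertices of $X\setminus X_0$, where the paper argues by counting ($S$ would have to contain a set of size at least $|X|+1$) while you argue the contrapositive constructively with two independent survivors; these are the same argument.
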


\begin{proof}
Assume that $S$ is not a cluster vertex deletion of $G[V\setminus
X_0]$.
This implies that there is an induced $P_3$ in $G[(V\setminus X_0)\setminus
S]$. Let $x,y$ be vertices in $X\setminus X_0$ and $u,v$ be vertices
in $V\setminus(X\cup S)$.
There are four possible types of induced paths: (1) $xuy$, (2) $xyu$, (3) $xuv$,
and (4) $uxv$.
We will rule out all these cases by a case analysis
(see Fig.~\ref{fig:CVD_proof_right}). 
\begin{itemize}
\item[(1)] Let
$A=\{w\in V'\cap C_{l_u}\mid xw\in E\wedge yw\notin E\}$, $B=\{w\in V'\cap C_{l_u}\mid xw\notin E\wedge
yw\in E\}$ and $C=\{w\in V'\cap C_{l_u}\mid xw\in E\wedge yw\in E\}$.
By the construction of $V'$, $|A| + |C| \geq |X|+1$ and $|B| + |C| \geq |X| +
1$. Thus $\min\{|A|,|B|\}\geq |X| - |C| + 1$.
Since $x,y\notin S$ and $\{x,y\}\notin E$, $S$ must contain $C\cup B$ or
$C\cup A$.
Thus $|S| \geq |X| + 1$, which is a contradiction.
\item[(2)] Let $A=\{w\in V'\cap
C_{l_u}\mid xw\notin E\wedge yw\notin E\}$, $B=\{w\in V'\cap C_{l_u}\mid xw\in E\wedge
yw\in E\}$ and $C=\{w\in V'\cap C_{l_u}\mid xw\notin E\wedge yw\in E\}$.
By the construction of $V'$, $|A| + |C| \geq |X| + 1$ and $|B| + |C| \geq |X| +
1$. Thus $\min\{|A|,|B|\}\geq |X| - |C| + 1$.
Since $x,y\notin S$ and $\{x,y\}\in E$, $S$ must contain $C\cup B$ or $C\cup
A$. Thus $|S| \geq |X| + 1$, which is a contradiction.
\item[(3)] Since $|S| \leq |X|$, there
is a vertex $u'\in (V'\cap C_{l_u})\setminus S$ such that $\{x,u'\}\in E$ and a
vertex $v'\in (V'\cap C_{l_u})\setminus S$ such that $\{x,v'\}\notin E$. However
it contradicts the fact that $G[V'\setminus S]$ contains no induced $P_3$.
\item[(4)] Since $|S| \leq |X|$, there
is a vertex $u'\in (V'\cap C_{l_u})\setminus S$ such that $\{x,u'\}\in E$ and a
vertex $v'\in (V'\cap C_{l_v})\setminus S$ such that $\{x,v'\}\in E$. However
it contradicts the fact that $G[V'\setminus S]$ contains no induced $P_3$.
\end{itemize}
\end{proof}

\begin{figure}[tb]
 \begin{center}
  \includegraphics[height=35mm]{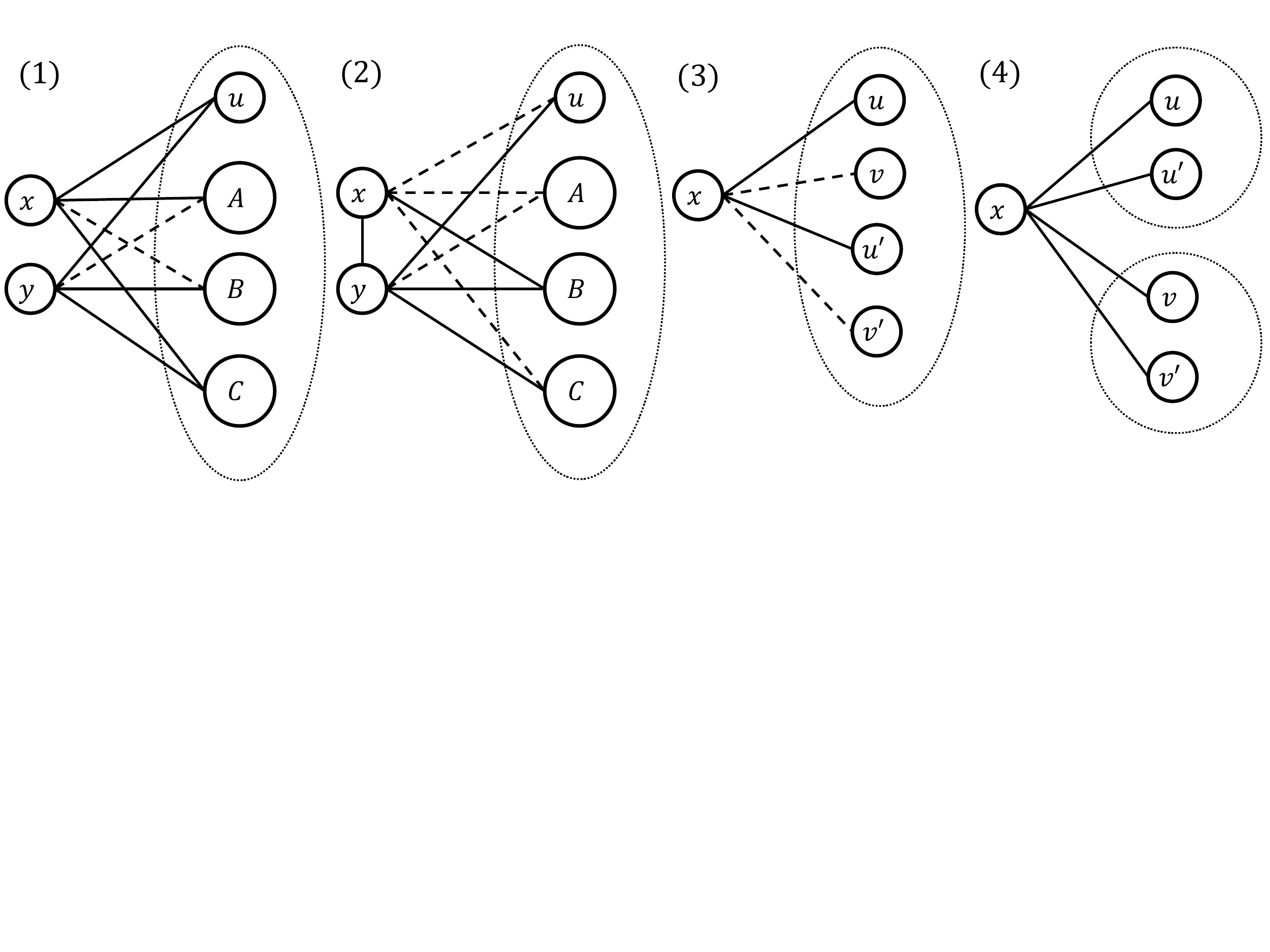}
 \end{center}
 \caption{Case analysis in the proof of Lemma~\ref{lem:CVD_right}. A dotted line
 denotes there is no edge(s)}
 \label{fig:CVD_proof_right}
\end{figure}

\begin{lemma}\label{lem:CVD_left}
Let $V'$ and $X_0$ be the sets computed by Algorithm~\ref{alg:CVD:compress}.
For any cluster vertex deletion $T$ of $G$ such that $|T|\leq |X|$, the following hold:
\begin{enumerate}
\item $T$ contains $X_0$,
\item $T\cap V'$ is a cluster vertex deletion of $G[V']$.
\end{enumerate}
\end{lemma}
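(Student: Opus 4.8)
The plan is to handle the two claims separately, since they rest on quite different ideas. Claim~2 will follow immediately from the fact that cluster graphs are hereditary (closed under taking induced subgraphs), so essentially all the work lies in Claim~1, where the defining property $|L_x| > |X|+1$ of the vertices placed into $X_0$ must be converted into a lower bound on $|T|$.

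For Claim~1, I would argue by contradiction: fix a vertex $x \in X_0$ and suppose $x \notin T$. The point is that $x$ is incident to more than $|X|+1$ distinct clusters of $G[V\setminus X]$, i.e. $|L_x| \ge |X|+2$. For a label $l \in L_x$ call the cluster $C_l$ \emph{surviving} (with respect to $x$ and $T$) if it contains a neighbour of $x$ that avoids $T$, that is if $P^+_{x,l} \not\subseteq T$. The key observation is that at most one cluster in $L_x$ can survive: if two distinct labels $l,l' \in L_x$ had surviving neighbours $w \in P^+_{x,l}\setminus T$ and $w' \in P^+_{x,l'}\setminus T$, then $w,x,w'$ would induce a $P_3$ in $G[V\setminus T]$ — the edges $\{x,w\}$ and $\{x,w'\}$ exist, while $\{w,w'\}\notin E$ because $w$ and $w'$ lie in different clusters of the cluster graph $G[V\setminus X]$, and $x\notin T$ by assumption — contradicting that $T$ is a cluster vertex deletion of $G$. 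Hence at least $|L_x|-1$ labels $l$ satisfy $P^+_{x,l}\subseteq T$; since each such $P^+_{x,l}$ is non-empty (as $l\in L_x$) and the clusters are pairwise disjoint, $T$ must contain at least $|L_x|-1 \ge |X|+1$ distinct vertices. This contradicts $|T|\le |X|$, so $x\in T$; as $x\in X_0$ was arbitrary, $X_0\subseteq T$, proving Claim~1.

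For Claim~2, I would simply note that $G[V']-(T\cap V')$ is the subgraph of $G$ induced by $V'\setminus T$, and that $V'\setminus T\subseteq V\setminus T$. Since $T$ is a cluster vertex deletion of $G$, the graph $G[V\setminus T]$ is a cluster graph, i.e. contains no induced $P_3$; as the $P_3$-free property is hereditary, its induced subgraph $G[V'\setminus T]=\bigl(G[V\setminus T]\bigr)[V'\setminus T]$ is again $P_3$-free and hence a cluster graph. Therefore $T\cap V'$ is a cluster vertex deletion of $G[V']$.

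The main obstacle is Claim~1, and specifically making precise why a non-deleted $x$ forces $T$ to almost exhaust the clusters around it. The ``fan of $P_3$'s through $x$'' picture is intuitive, but one has to be careful to (i) phrase the survival condition on the neighbour sets $P^+_{x,l}$ rather than on whole clusters, (ii) check that the $|L_x|-1$ guaranteed deleted vertices are genuinely distinct, which is where disjointness of clusters and non-emptiness of each $P^+_{x,l}$ enter, and (iii) use the inequality $|L_x|\ge |X|+2$ coming from the strict threshold $|L_x|>|X|+1$ in the algorithm, so that the count $|L_x|-1$ strictly exceeds $|X|$. Claim~2 is routine once heredity is invoked.
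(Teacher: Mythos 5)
Your proof is correct and follows essentially the same route as the paper's: the paper also argues that a vertex $x\in X_0\setminus T$ would force $T$ to contain at least $|L_x|-1>|X|$ vertices from the clusters adjacent to $x$ (your ``at most one surviving cluster'' observation is exactly the step the paper leaves implicit), and it disposes of Claim~2 by heredity of the cluster-graph property under induced subgraphs. Your write-up merely makes the counting of the $|L_x|-1$ distinct deleted vertices more explicit than the paper does.
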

\begin{proof}
First, let us prove that $T$ contains $X_0$. Assume there exists $x\in
X_0\setminus T$. Since $|L_x|$, the number of adjacent clusters of $x$, is more
than $|X|+1$, in order to avoid induced $P_3$, $T$ must contain at least $|L_x|-1 > |X|$
vertices from adjacent clusters. It contradicts the fact that $|T|\leq |X|$.
Thus, $T$ contains $X_0$, and so $T\setminus X_0$ is a cluster vertex deletion
of $G[V\setminus X_0]$.

Since $G[V\setminus T]$ is a cluster graph, its induced subgraph
$G[V'\setminus T]$ 
is also a cluster graph.
Thus $T\cap V'$ is a cluster vertex deletion of $G[V']$.
\end{proof}

\begin{proof}[Proof of Lemma~\ref{lem:CVD:compress}]
Let $X^*$ be a minimum cluster vertex deletion.
Since $X$ is a cluster vertex deletion, we have $|X^*| \leq |X|$.
By Lemma~\ref{lem:CVD_left}, it holds that $X_0\subseteq X^*$, and
$X^*\setminus X_0$ is a cluster vertex deletion of $G[V']$.
$X^*\setminus X_0$ is actually a minimum cluster vertex deletion of $G[V']$,
because otherwise there is a cluster vertex deletion $S$ of $G[V']$ such that
$|S|<|X^*\setminus X_0| \leq |X\setminus X_0|$, but then by
Lemma~\ref{lem:CVD_right}, $S\cup X_0$ becomes a cluster vertex deletion of
$G$ of size less than $|X^*|$, which is a contradiction.

If the size of the set $Y$ computed at line~\ref{lne:cvd:cmp:Y} is larger than $|X\setminus X_0|$, the set $X$ remains a
3-approximate solution.
Otherwise, from Lemma~\ref{lem:CVD_right}, $Y\cup X_0$ is a cluster vertex
deletion of $G$.
Since $Y$ is a 3-approximate solution and $X^*\setminus X_0$ is a minimum
cluster vertex deletion of $G[V']$, we have
\begin{eqnarray}
|Y\cup X_0|	& \leq 3|X^*\setminus X_0| + |X_0|
			& \leq 3|X^*|.
\end{eqnarray} 
Thus, $X' = Y\cup X_0$ is a 3-approximate solution on $G$.

The claimed time complexity is obtained as follows.
The size of $V'$ at line~\ref{lne:cvd:cmp:Y} is at most
$2|X|(|X|+1)^2 = O(|X|^3)$.
The number of edges in the graph $G[V']$ is maximized when $G[V'\setminus X]$ is
composed of $|X|+1$ cliques with size $|X|(|X|+1)$. Thus the number of edges
is at most $|X|^2(|X|+1)^3 = O(|X|^5)$.
Thus, a 3-approximate solution can be computed in $O(|X|^8)$ time
using the trivial algorithm described in
Section~\ref{sec:CVD_prob_time}, and thus the claimed time complexity holds.
\end{proof}

Now we are ready to describe how to update the data structure when an edge is
modified.
To add (remove) an edge $\{u,v\}$ to (from) a graph $G$, before modifying $G$, we add
$u$ and $v$ to $X$ one by one using
Algorithm~\ref{alg:CVD:add} unless the vertex is already in $X$.
After the operation, we add (remove) the edge $\{u,v\}\subseteq X$
to (from) $G$.
Note that we do not have to change our data structure by this operation.
Now $X$ is a cluster vertex deletion but may no longer be a 3-approximate solution.
Then we compute a new 3-approximate solution $X'$
using Algorithm~\ref{alg:CVD:compress}.

Finally we replace $X$ by $X'$ as follows.
Let $R$ be $X\setminus X'$ and $R'$ be $X'\setminus X$.
We begin with adding every vertex in $R'$ to $X$ one by one using
Algorithm~\ref{alg:CVD:add}. Then we remove every vertex in $R$ from $X$ one by
one using Algorithm~\ref{alg:CVD:remove}, and finish the replacement.
During the process, $X$ is always a cluster vertex deletion of the
graph, and thus the assumption of Algorithm~\ref{alg:CVD:remove} is satisfied.

Let $k$ be the maximum of the cvd numbers before and after the edge
modification.
During the above process, the size of $X$ is increased to at most $6k$.
Algorithm~\ref{alg:CVD:compress} is called only once, and
Algorithm~\ref{alg:CVD:add} and~\ref{alg:CVD:remove} are called $O(k)$ times.
Thus together with Lemma~\ref{lem:CVD:add},~\ref{lem:CVD:remove}
and~\ref{lem:CVD:compress}, the update time is $O(k^8 + k^2\log n)$.

\subsubsection{Query}
Let us explain how to answer a query.
To compute a minimum cluster vertex deletion $X'$, we use
almost the same algorithm as Algorithm~\ref{alg:CVD:compress}, but compute an exact
solution $Y$ at line~\ref{lne:cvd:cmp:Y} instead of an approximate solution.
The validity of the algorithm can be proved by almost the same argument.
The bottleneck of the algorithm is to compute a minimum cluster vertex deletion
of the graph $G[V']$.
Since the number of edges in $G[V']$ is $O(k^5)$ as noted in the proof of
Lemma~\ref{lem:CVD:compress}, using an $O(f(|G|,k))$-time static algorithm for \CVD,
we can obtain the solution in $O(f(k^5,k))$ time.
For example, using the algorithm in~\cite{DBLP:journals/tcs/ChenKX10}, we
can compute the solution in $O(1.9102^kk^5)$ time.

\section{Another Dynamic Graph for Cluster Vertex Deletion (constant time update and
query)}\label{sec:CVD2}
In this section, we give an $O(f(k))$-time dynamic graph for \CVD, where $k$ is
the cvd number and $f$ is some computable function.
This algorithm is not efficient than the algorithm in Section~\ref{sec:CVD} in
many cases in practice. However, we introduce this algorithm here because it
is asymptotically faster than the algorithm in Section~\ref{sec:CVD} if $k$ is a
constant, and is a base of the dynamic graph
for \CN parameterized by cvd number in Section~\ref{sec:CN}.

\subsection{Data Structure}
We maintain $X$ to be a minimum cluster vertex deletion. Initially
$X=\emptyset$.
Let us define an equivalence relation on $V\setminus X$ so
that two vertices $u,v\in V\setminus X$ are equivalent if and only if $u$ and
$v$ are in the same cluster and $N(u)\cap X = N(v)\cap X$.
The important point is that we do not have to distinguish the vertices in the
same class and can treat them as if they are one vertex weighted by the number
of the vertices in the same class. To treat the vertices in a same class efficiently, we introduce an
auxiliary data structure, an undirected graph $H$.
The graph $H$ is uniquely determined by $G$ and $X$ as follows (see
Fig.~\ref{fig:cluster}.)
\begin{figure}[tb]
 \begin{center}
  \includegraphics[height=25mm]{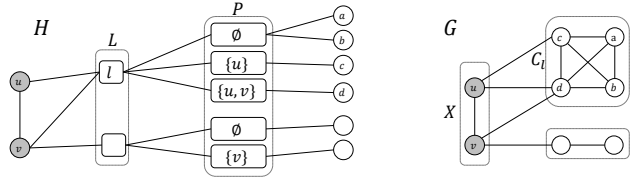}
 \end{center}
 \caption{Correspondence of $G$ and $X$ to $H$.}
 \label{fig:cluster}
\end{figure}

First, we introduce \emph{cluster labels} $L$ by assigning a different label $l\in L$ to a different cluster $C_l \subseteq V\setminus X$.
For each cluster label $l\in L$ and a vertex set $S\subseteq X$,
let $C_{l,S}$ denote the equivalent class $\{v\in C_l\mid N(v)\cap X = S\}$.
We introduce a \emph{class label} $p_{l,S}$ for each nonempty equivalent class
$C_{l,S}$. Let $P$ be the set of the introduced class labels.
The vertex set of $H$ is $X\cup L\cup P\cup (V\setminus X)$.

Then, we add an edge between $x$ and $y$ in $H$ if and only if one of the following
conditions is satisfied:
\begin{itemize}
  \item $x,y\in X$ and $e\in E(G[X])$,
  \item $x\in X, y\in L$ and $x\in N(C_y)$,
  \item $x\in L$ and $y=p_{x,S}\in P$ for some $S\subseteq X$, or
  \item $x=p_{l,S}\in P$ and $y\in C_{l,S}$.
\end{itemize}

\subsection{Strategy}\label{sec:cvd2:strategy}
Whenever the graph $G$ is updated, we transform the current
solution $X$ to a minimum solution as follows:
\begin{itemize}
\item[(1)] $X$ is extended to be a solution.
\item[(2)] A problem kernel $G[V']$ is obtained using the solution.
\item[(3)] A minimum solution $X'$ is obtained applying a (static)
algorithm for $G[V']$, and $X$ is exchanged for the minimum solution $X'$.
\end{itemize}
Let us call the steps (1), (2) and (3) as \emph{growing phase}, \emph{compression
phase} and \emph{exchange phase} respectively.
To compute an exact solution, we can apply any exact FPT algorithm to the current
problem kernel $G[V']$.

\subsubsection{Move a vertex to $X$}\label{sec:addCVD}
Let us show how to move a vertex $v$ in $V\setminus X$ to $X$ and update the graph $H$ efficiently.
This procedure is used in the growing phase and the exchange phase.

\begin{enumerate}
  \item[(1)] Let $C_{l,S}$ be the equivalent class containing $v$. Remove the
  edge $\{p_{l,S},v\}$ of $H$. If $C_{l,S}$ becomes empty, remove the label $p_{l,S}$
  destroying the incident edge to $l$. Moreover, if $C_l$ becomes empty, remove
  the label $l$ destroying all incident edges to $X$.
  \item[(2)] If the label $l$ is not removed, add the edge $\{v,l\}$. For each neighbor $u\in N_G(v)\cap
  X$, add the edge $\{u,v\}$. If $C_{l,S}$ becomes empty but $C_l$ is still not empty,
  for each $u\in N_G(v)\cap X$ that is no more adjacent to the
  cluster $C_l$, remove the edge between $u$ and $l$. Then, for each class
  label $p_{l,S'}$ adjacent to $l$, change its name to $p_{l,S'\cup\{v\}}$.
\end{enumerate}

Step (1) corresponds to the removal of $v$ from $V\setminus X$ and step (2)
corresponds to the addition of $v$ to $X$.
It is easy to see that this procedure correctly updates $H$.
Since there are at most $O(2^{|X|})$ equivalent classes for each cluster, 
the running time is $O(2^{|X|}|X|^2)$.

\subsubsection{Growing Phase}\label{sec:growCVD}
To handle an insertion or a deletion of an edge, we first move its endpoints
not in $X$ to $X$ to make $X$ a cluster vertex deletion.
This phase is completed in $O(2^kk^2)$ time, and now $|X|$ is at most $k+2$.

\subsubsection{Compression Phase}\label{sec:compressCVD}
If we remove a vertex $v\in X$ from $X$ that is adjacent to $d$ different clusters,
we have to include at least $d-1$ vertices into $X$ because there are no edges between
different clusters.
Thus, to improve the solution $X$, we can only remove a vertex $v\in X$ that is adjacent
to at most $|X|$ different clusters.
Let $X_0$ be the vertices in $X$ such that the number of its
adjacent clusters is at most $|X|$. As noted above, $X_1 = X\setminus X_0$ must
be kept in $X$ to improve the solution.
Let $L'$ be the labels of clusters adjacent to any vertex in
$X_0$. $|L'| \leq |X|^2$ by the definition of $X_0$.

Now we make a vertex weighted graph $G'$.
Let $P_l$ denote the set of the class labels of the form $p_{l,*}$.
The vertex set of $G'$ is $X_0\cup P'$, where $P'=\bigcup_{l\in L'}P_l$.
Since the size of $P_l$ is at most $2^{|X|}$, it holds $|V(H)|\leq |X| + 2^{|X|}|X|^2$. The weight $w:V(H)\rightarrow
\mathbb{N}$ is defined by $w(v) = 1$ for every $v\in X_0$ and $w(p_{l,S}) =
|C_{l,S}|$ for every $p_{l,S}\in P'$.
For each edge $e\in E(G[X_0])$, we add $e$ to $G'$.
For each $v\in X_0$ and $p_{l,S}\in P'$ we add the edge between them if $v\in
S$.
Finally, for each $l\in L'$, we add edges between every two class labels in
$P_l$, making $G'[P_l]$ a clique.
This completes the construction of the graph $G'$.

Then we solve \CVD for the vertex weighted graph $G'$, and determine
$X'$ as the vertices corresponding to the solution. Since $|V(H)|
\leq 2^kk^2 + k$, apparently it is solvable in $O(f(k))$ time for some function
$f$.
For example, using the $O(2^kk^9+|V||E|)$ algorithm proposed by
H{\"u}ffner, Komusiewicz, Moser and
Niedermeier~\cite{DBLP:journals/mst/HuffnerKMN10}, we can solve the problem in
$O(2^kk^9+8^kk^6) = O(8^kk^6)$ time.

\subsubsection{Exchange Phase}
Let $R$ be $X\cap X'$. To exchange $X$ for $X'$, we add all vertices in $X'\setminus R$ to
$X$ and then remove all vertices in $X\setminus R$ from $X$. Note that during the
process, $X$ is always a cluster vertex deletion. 
The additions to $X$ are executed using the method in Section~\ref{sec:addCVD}
at most $k$ times.

Let us show how to remove a vertex $v$ from $X$.
Before removing $v$ from $X$, we update $H$ as follows.
\begin{enumerate}
  \item[(1)] Since $X\setminus\{v\}$ is a cluster vertex deletion, $v$ is adjacent to at most one cluster label. If
  there is such a cluster label $l$, remove the edge $\{v,l\}$. Otherwise,
  introduce a new cluster label $l$ corresponding to an empty cluster.
  Let $S$ be $N_H(v)$, that is a subset of $X\setminus \{v\}$.
  Remove all edges from $v$ to $S$.
  \item[(2)]
  Add the edges from the vertex in $S$ to $l$ unless it already exists.
  Then, add the edges $\{l,p_{l,S}\}$ and $\{p_{l,S},v\}$, introducing the
  class label $p_{l,S}$ unless it already exists.
  Finally, we iterate over each class label $p_{l,S'}$ and change
  its name to $p_{l,S'\setminus \{v\}}$.
  Note that $v$ is in $S'$ by the assumption.
\end{enumerate}

Step (1) corresponds to the removal of $v$ from $X$ and step (2)
corresponds to the addition of $v$ to $X\setminus V$. It is easy to see
that this procedure takes $O(2^{|X|}|X|^2)$ time and correctly updates $H$.

During the exchange, the size of $X$ is increased to at
most $2k$. Thus the exchange phase can be done in $\sum_{k'=k+1}^{2k}O(k'^2 2^{k'})=O(k^22^{2k})$ time.

In summary, we have obtained an $O(8^kk^6)$-time dynamic graph algorithm for
\CVD. The dominant part is the compression phase.

\section{Dynamic Graph for Chromatic Number Parameterized by CVD Number}\label{sec:CN}
Let $G=(V,E)$ be a graph.
Given a vertex set $S\subseteq V$, $\Pi(S)$ denotes the set of all
\emph{partitions} of $S$.
The size of all partitions $|\Pi(S)|$ is called \emph{Bell number} $B_{|S|}$ and
it is known that $B_n = 
O((\frac{0.792n}{\mathrm{ln}(n + 1)})^n)$~\cite{berend2010improved}.
Note that we can easily enumerate all partitions in $\Pi(S)$ in $O(B_{|S|})$
time.
A partition of the vertices of a graph can be regarded as a \emph{coloring} of
the graph, since given a partition, by coloring vertices in the same block with
the same color, we can determine the corresponding coloring on the graph.
For $p\in \Pi(S)$ and $S'\subseteq S$, let $p|_{S'}$ denote the restriction of
$p$ into $S'$, that is the unique coloring $p'\in \Pi(S')$ that can be extended to $p$.
For a vertex set $S\subseteq V$, we call a
partition $p\in \Pi(S)$ a \emph{proper coloring} or \emph{proper} if no
two vertices in $S$ sharing the same edge are in a same block.
In other words, considering the corresponding coloring, if there are no adjacent
vertices with the same color, then the partition is a proper coloring. For a
coloring $p$, its \emph{size} $|p|$ is defined by the number of blocks of $p$.
\CN is a problem of finding the size of a minimum proper coloring of $V$. The
size is called \emph{chromatic number}.

In this section, we provide a data structure for {\sf Chromatic Number}, that
works efficiently when the size of a minimum cluster vertex deletion (cvd
number) of the graph is small.
\begin{theorem}\label{thm:CN_CVD}
There is a data structure treating the graph $G$ that supports the following
operations:
\begin{itemize}
\item[(1)] Compute the solution of {\sf Chromatic Number} of the current graph
$G$.
\item[(2)] Add an edge to $G$.
\item[(3)] Remove an edge from $G$.
\end{itemize}
The operation (1) takes $O(1)$ time, and both (2) and (3) take
$O(\log n)$ time assuming the cvd number is a constant.
More precisely, the time complexity of (1) is $O(1)$, and of (2) and (3) are
$O(B_{2k}(4^kk^3 + \log n))$, where $k$ is maximum of the cvd number of the
current graph and cvd number of the updated graph.
\end{theorem}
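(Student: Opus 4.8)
The plan is to build on the dynamic \CVD structure of Section~\ref{sec:CVD2}, which already maintains a minimum cluster vertex deletion $X$ (of size at most $2k$ during an update) together with the auxiliary graph $H$ recording every equivalence class $C_{l,S}=\{v\in C_l\mid N(v)\cap X=S\}$, and to layer on top of it the static coloring algorithm of~\cite{DBLP:journals/jcss/GuoGHNW06}. Since $G[V\setminus X]$ is a disjoint union of cliques, any proper coloring of $G$ restricts to a proper coloring $p$ of $G[X]$, and once $p$ is fixed the clusters can be colored independently: a vertex of class $C_{l,S}$ may receive any color not used on $S$, while the clique constraint forces the vertices of $C_l$ to get pairwise distinct colors. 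Writing $\tau_l(p)$ for the minimum number of colors needed to properly color $C_l$ consistently with $p$, this yields
$$\chi(G)=\min_{p}\ \max\Bigl(|p|,\ \max_{l}\tau_l(p)\Bigr),$$
where $p$ ranges over the at most $B_{2k}$ proper colorings of $G[X]$. The value $\tau_l(p)$ is the smallest $t$ for which a transportation instance is feasible: each class $C_{l,S}$ is a source of supply $|C_{l,S}|$ joined to the colors $\{1,\dots,t\}\setminus p(S)$, each color having unit capacity. Because the classes are indexed by subsets of $X$, there are at most $2^{|X|}\le 4^k$ of them per cluster, so this flow has $O(4^k+k)$ nodes and $\tau_l(p)$ can be computed in $O(4^kk^2)$ time by adding fresh, unconstrained colors one at a time until the Hall deficiency vanishes.

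To answer queries in $O(1)$ I would store, for each of the $B_{2k}$ colorings $p$ of $G[X]$, a balanced search tree holding the multiset $\{\tau_l(p)\}_l$ keyed by value, so that $\max_l\tau_l(p)$ is available in constant time; clusters with no neighbor in $X$ all have $\tau_l(p)=|C_l|$ independent of $p$, so their contribution is tracked by a single running maximum. A further structure over the $B_{2k}$ colorings keeps the outer minimum $\chi(G)$ ready, and a query simply reports it.

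For an update I would first invoke the growing, compression and exchange machinery of Section~\ref{sec:CVD2} to transform $X$ into a new minimum solution $X'$ and update $H$ and every class $C_{l,S}$. The crucial structural observation is locality: because the clusters of $G[V\setminus X]$ form cliques with no edges between them, a vertex $v$ that moves out of $X$ into a cluster is (by validity of the removal) adjacent to at most one cluster, and a vertex that moves from $V\setminus X$ into $X$ has all its non-$X$ neighbors inside its own cluster. Hence each elementary move of Section~\ref{sec:CVD2} alters the class structure of a single cluster and relabels $p_{l,S}\mapsto p_{l,S\cup\{v\}}$ (or its inverse) only there. For every such affected cluster $l$ and every coloring $p$ I would recompute $\tau_l(p)$ by the flow above and replace its entry in the corresponding search tree in $O(\log n)$ time, then re-evaluate the outer minimum. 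Summing a flow recomputation of $O(4^kk^2)$ over the $O(k)$ affected clusters gives $O(4^kk^3)$ per coloring, and the search-tree updates contribute the $O(\log n)$ term, for a total of $O(B_{2k}(4^kk^3+\log n))$.

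The main obstacle is exactly that replacing $X$ by $X'$ changes the ground set of the colorings $p$ and, a priori, the neighborhood $N(v)\cap X$ of every vertex, which would seem to force recomputation of $\tau_l(p)$ for all clusters and a wholesale rebuild of the $B_{2k}$ search trees. The heart of the argument is therefore to carry the transition through the elementary single-vertex moves of Section~\ref{sec:CVD2} and to prove, via the locality observation above, that each such move touches the class structure — and hence the values $\tau_l(\cdot)$ — of only one cluster, so that the set of clusters requiring recomputation stays confined to the $O(k)$ clusters already handled by those primitives while the colorings are re-indexed incrementally (a vertex entering $X$ branches each coloring into its color extensions, a vertex leaving $X$ merges them). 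Once this locality and the consistency of the re-indexing are established, the per-coloring flow computation and the $O(\log n)$ tree updates combine to give the claimed $O(B_{2k}(4^kk^3+\log n))$ update time and $O(1)$ query time.
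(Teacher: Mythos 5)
Your overall architecture is the same as the paper's: maintain the minimum cluster vertex deletion $X$ and the equivalence classes $C_{l,S}$ via the structure of Section~\ref{sec:CVD2}, precompute for each cluster and each coloring of its $X$-neighborhood the minimum number of colors needed to finish that cluster (your $\tau_l(p)$ is the paper's $\chi_{l,p}$, and your ``transportation feasibility'' test is the same bipartite flow as in Lemma~\ref{lem:CN_recompute}), aggregate these values in balanced search trees so the inner maximum is cheap, exploit the locality of single-vertex moves (only one cluster's classes change per move), and recompute the outer minimum over proper colorings of $G[X]$ at update time so that queries are $O(1)$.

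The one genuine gap is exactly the point you flag as ``the main obstacle'' and then leave unresolved: you key your search trees by colorings $p$ of the \emph{entire} set $X$, each tree holding $\{\tau_l(p)\}_l$ over \emph{all} clusters, of which there may be $\Theta(n)$. When a vertex enters or leaves $X$ the index set of colorings changes, and your proposed fix --- ``branching each coloring into its color extensions'' and ``merging'' on removal --- requires duplicating or fusing $B_{2k}$ trees of total size $\Theta(B_{2k}\cdot n)$; done naively this costs $\Omega(n)$ per update and breaks the $O(B_{2k}(4^kk^3+\log n))$ bound. It can be rescued (each new tree differs from an old one in only the single affected cluster's entry, since $\tau_l(p)$ depends only on $p$ restricted to the neighbors of $C_l$ in $X$, so persistent trees plus a consistency argument for the merge direction would work), but you neither invoke persistence nor prove the consistency. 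The paper sidesteps the issue entirely by keying $\chi_{l,p}$ on $p\in\Pi(X_l)$ and the multisets $\Lambda_p$ on $p\in\Pi(Y)$ for $Y\subseteq X$, so that a vertex move touches only the $O(B_{2k})$ multiset entries of the single affected cluster, and the full colorings of $G[X]$ are consulted only at evaluation time via the restriction $p\mapsto p|_Y$ in formula~(\ref{eqn:CN:sol}). You should either adopt that indexing or supply the persistent-tree argument; as written, the update-time bound is not established.
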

The first static FPT algorithm for this problem is proposed by Martin Doucha and Jan Kratochv\'il~\cite{DBLP:conf/mfcs/DouchaK12}. Our algorithm is based on the algorithm, but updates necessary information dynamically.

\subsection{Data Structure}
As an underlying data structure, we maintain the data structure described in
Section~\ref{sec:CVD2}.
For a cluster label $l$, let $X_l$ be $\{x\in X\mid N(x)\cap C_l\neq
\emptyset\}$. $X_l$ is easily computed in $O(k)$ time from the auxiliary data
structure $H$.
In addition, we maintain the following information:
\begin{itemize}
  \item For every cluster label $l$ and coloring $p\in \Pi(X_l)$, $\chi_{l,p}$
  is the size of a minimum proper coloring on $G[X_l\cup C_l] - E(G[X_l])$ made
  by extending $p$.
  \item For every $Y\subseteq X$ and $p\in \Pi(Y)$, $\Lambda_p$ is the multiset
  containing every $\chi_{l,p}$ such that $X_l = Y$. We use a balanced binary
  search tree to hold $\Lambda_p$ to efficiently remove or add a value and get
  the maximum value in the set.
\end{itemize}


More formally, $\chi_{l,p}$ is defined by
\begin{eqnarray}
\min\{|q| \mid & q & \in\Pi(C_l\cup X_l) \wedge q|_{X_l}=p \nonumber \\
& \wedge & q \text{ is a proper coloring on } G[X_l\cup C_l] - E(G[X_l])\},
\label{eqn:CN:chi}
\end{eqnarray}
and for $Y\subseteq X$ and $p\in \Pi(Y)$, $\Lambda_{p}$ is the multiset
\begin{eqnarray}
\{\chi_{l,p} \mid l \text{ is a cluster label } \wedge X_l=Y\}.
\label{eqn:CN:lam}
\end{eqnarray}

When all $\Lambda_p$ are given, we can compute the chromatic number of the
graph as:
\begin{eqnarray}
\min_{\text{proper coloring } p\in\Pi(X) \text{ on } G[X]}
\max\{|p|,
(\max_{Y\subseteq X}\text{ maximum value in }\Lambda_{p|_Y})
\}, \label{eqn:CN:sol}
\end{eqnarray}
where maximum value of an empty set is defined to be 0.

\begin{proof}
\begin{eqnarray}
\max\{|p|, (\max_{Y\subseteq X} (\max \Lambda_{p|_Y} ))\} \label{eqn:CN:sol2}
\end{eqnarray}
is the minimum number of colors needed to color the graph $G-E(G(X))$, where the
coloring on $X$ is fixed to be $p$.
Thus moving $p$ over all proper coloring on $G[X]$ and taking the minimum of
(\ref{eqn:CN:sol2}), we obtain the size of a minimum proper coloring.
\end{proof}

Since there are at most $n=|V|$ clusters, the maximum value in $\Lambda_{p|_Y}$
is obtained in $O(\log n)$ time. There are at most $O(B_{|X|}|X|)$
possibilities of $p|_Y$ in the formula~(\ref{eqn:CN:sol}). Thus the formula is
computed in $O(B_{|X|}|X|\log n)$ time.

\subsection{Algorithm to Update Data Structure}
Let us explain how to update the data structure when an edge is added or
removed.
Before describing the whole algorithm, let us introduce subroutines used in the
algorithm. 

\subsubsection{Add (Remove) a Vertex to (from) $X$.}
Let us explain how to add (remove) a vertex to (from) $X$ and
update the data structure accordingly. We
assume that when a vertex $y$ is removed from $X$, $G[V\setminus
(X\setminus\{y\})]$ is a cluster graph.

To add a vertex $u$ to $X$:
\begin{enumerate}
  \item Let $l$ be the label of the cluster that the vertex $u$ belongs to.
  For every $p\in \Pi(X_l)$, we remove $\chi_{l,p}$ from $\Lambda_p$ to prepare
  the update of the value $\chi_{l,p}$.
  \item We run the algorithm in Section~\ref{sec:CVD2}.
  In particular, $X$, $X_l$, $C_l$ and $C_{l,S}$ are updated for every
  $S\subseteq X$.
  \item If the cluster $C_l$ still exists, we compute $\chi_{l,p}$ for every
  $p\in \Pi(X_l)$ using Lemma~\ref{lem:CN_recompute}, and then add $\chi_{l,p}$
  to $\Lambda_p$.
\end{enumerate}

To remove a vertex $y$ from $X$:
\begin{enumerate}
  \item From the assumption, $y$ is adjacent to at most one cluster. If there is
  a cluster that $y$ is adjacent to, let $l$ be the label of the cluster, and we
  remove $\chi_{l,p}$ from $\Lambda_p$ to prepare the update of the value
  $\chi_{l,p}$.
  \item We run the algorithm in Section~\ref{sec:CVD2}. 
  In particular, $X$, $X_l$, $C_l$ and $C_{l,S}$ are updated for every
  $S\subseteq X$, where $l$ is the label of the cluster that $y$ belongs to now.
  \item We compute $\chi_{l,p}$ for every $p\in \Pi(X_l)$ using
  Lemma~\ref{lem:CN_recompute}, and then add $\chi_{l,p}$ to $\Lambda_p$.
\end{enumerate}

Now, we prove that $\chi_{l,p}$ can be computed efficiently.
\begin{lemma}\label{lem:CN_recompute}
After the equivalent class $C_{l,S}$ is updated for each $S\subseteq
X$, for any $p\in\Pi(X_l)$, $\chi_{l,p}$ can be computed in $O(2^{|X|}|X|^3)$
time.
\end{lemma}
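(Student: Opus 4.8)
The goal is to compute $\chi_{l,p}$, the minimum number of colors for a proper coloring of $G[X_l \cup C_l] - E(G[X_l])$ that restricts to the fixed coloring $p$ on $X_l$. The key structural observation is that the cluster $C_l$ is a clique, so in any proper coloring all vertices of $C_l$ receive distinct colors; meanwhile the edges inside $X_l$ are deleted, so $p$ only constrains $C_l$ through the edges between $X_l$ and $C_l$. Since two vertices of $C_l$ in the same equivalence class $C_{l,S}$ have identical neighborhoods $S$ in $X$, a color that is feasible for one vertex of $C_{l,S}$ is feasible for every vertex of $C_{l,S}$. Thus the problem reduces to assigning distinct colors to the clique $C_l$ subject to the constraint that each vertex in class $C_{l,S}$ must avoid the colors $p$ uses on $S$, while reusing colors already spent on $X_l$ whenever possible to minimize the total.

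The plan is to phrase this as a minimization of total colors via a bipartite-matching / flow argument, exactly as one reasons about list-coloring a clique. First I would fix the set of colors that $p$ already uses on $X_l$; call them the \emph{old} colors, of which there are $|p|$. Each vertex of $C_{l}$ may either reuse one of the old colors it is permitted to take (an old color $c$ is permitted for a vertex in class $C_{l,S}$ precisely when no vertex of $S$ is colored $c$ under $p$) or consume a fresh color. Because $C_l$ is a clique, each color is used at most once across all of $C_l$, so the question of how many fresh colors we need is: how many vertices of $C_l$ can be simultaneously assigned \emph{distinct} permitted old colors? I would set up a bipartite graph between the classes $C_{l,S}$ (with multiplicities $|C_{l,S}|$) on one side and the old colors on the other, with an edge whenever the color is permitted for that class, and compute a maximum matching (equivalently a max-flow with capacities $|C_{l,S}|$ on class-nodes and $1$ on each color). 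If $M$ vertices of $C_l$ can be matched to old colors, then the remaining $|C_l| - M$ vertices each need a fresh color, and $\chi_{l,p} = |p| + (|C_l| - M)$.

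For the running time, note that there are at most $2^{|X|}$ nonempty classes $C_{l,S}$ and at most $|X|$ old colors (since $|p| \le |X_l| \le |X|$), so the bipartite structure has $O(2^{|X|})$ nodes on the class side and $O(|X|)$ on the color side. Deciding whether an old color is permitted for a class $C_{l,S}$ amounts to checking whether that color appears on $S$ under $p$, which takes $O(|X|)$ time per (class, color) pair, giving $O(2^{|X|}|X|^2)$ to build the instance. A maximum matching with one side of size $O(|X|)$ can be found by $O(|X|)$ augmentations, each an $O(2^{|X|}|X|)$ search through the bipartite graph, for $O(2^{|X|}|X|^2)$ total; folding in the multiplicities via flow does not change this asymptotically. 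Everything stays within the claimed $O(2^{|X|}|X|^3)$ bound. The main obstacle I anticipate is the bookkeeping that justifies the reduction to a single clique: one must argue carefully that because $E(G[X_l])$ is deleted, the coloring of $X_l$ imposes no constraint among themselves and interacts with $C_l$ only through the permitted-color lists, and that the equivalence classes correctly capture exactly these lists — so that minimizing over all extensions $q$ of $p$ in the definition~(\ref{eqn:CN:chi}) is genuinely equivalent to the matching optimum, with neither spurious constraints nor missed color reuse.
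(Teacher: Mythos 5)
Your proposal is correct and follows essentially the same route as the paper: the paper also fixes the $|p|$ colors used on $X_l$, observes that the clique $C_l$ forces distinct colors, and computes the maximum number $r$ of cluster vertices that can reuse old colors via a max-flow between color nodes (capacity $1$) and class nodes $y_S$ (capacity $|C_{l,S}|$), concluding $\chi_{l,p}=|p|+|C_l|-r$ with the same Ford--Fulkerson-style running-time bound. No substantive differences to report.
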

\begin{proof}
We want to find the size of a minimum proper coloring $q$ on $G[X_l\cup C_l] -
E(G[X_l])$ such that $q|_{X_l} = p$.
Without loss of generality, let the set of colors used for $X_l$ be
$\{1,\ldots,|p|\}$. For $S\subseteq X_l$, let $c(S)\subseteq \{1,\ldots,|p|\}$
denote the colors assigned to $S$.

Since $G[C_l]$ is a cluster, the colors assigned to $C_l$ must be distinct.
Minimizing the size of the coloring is equivalent to maximizing the number of
vertices in $C_l$ that are colored with 1,\ldots,$|p|$. 

Let $r$ be the number of vertices in $C_l$ that are assigned a color in
$\{1,\ldots,|p|\}$. We want to maximize $r$ to minimize the number of
colors $|p|+|C_l| - r$.

We compute the maximum possible $r$ by constructing a graph $(\{s\}\cup L\cup
R\cup \{t\}, F)$ and computing the size of a maximum flow from $s$ to $t$.
Create vertices $x_1,\ldots,x_{|p|}$ and let $L$ be
$\{x_1,\ldots,x_{|p|} \}$.
We add edges $\{s, x_1\},\ldots,\{s, x_{|p|}\}$ with capacity one.
For each $S\subseteq X$, we create a vertex $y_S$ and add $y_S$ to $R$, and add
the edge $\{y_S,t\}$ with capacity $|C_{l,S}|$.
Then, for each $S\subseteq X$ and $i\in \{1,\ldots,|p|\}\setminus c(S)$,
we add an edge between $x_i$ and $y_S$ with capacity one, completing the
construction of the graph.
We compute $r$ as the size of a maximum $s$-$t$ flow of the constructed graph,
and conclude $\chi_{l,p}$, the size of a minimum coloring on $G[X_l\cup
C_l]-E(X_l)$ obtained by extending $p$, is $|p| + |C_l| - r$.

The size of $L$ is at most $|X|$ and the size of $R$
is at most $2^{|X|}|X|$. Thus the number of edges in the graph is at most
$O(2^{|X|}|X|^2)$.
Since the size of a maximum flow is at most $|L|\leq |X|$, using Ford-Fulkerson
algorithm~\cite{ford1956maximal}\footnote{whose time complexity is
$O((\text{\em the number of edges})(\text{\em maximum flow size}))$}, we compute
the solution in $O(|F||X|)=O(2^{|X|}|X|^3)$ time.
\end{proof}

Since $u$ was not adjacent to any cluster other than $C_l$, no change of
$\chi_{l',p'}$ is needed for any cluster label $l'\neq l$ and $p'\in
\Pi(X_{l'})$. Thus we have obtained the following lemma.
\begin{lemma}\label{lem:CN:time_add}
We can add (remove) a vertex to (from) $X$ and
update the data structure accordingly in
$O(B_{|X|}\log n + B_{|X|}2^{|X|}|X|^3)$ time.
\end{lemma}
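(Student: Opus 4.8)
The plan is to prove Lemma~\ref{lem:CN:time_add} by directly accounting for the cost of each of the three steps in the add (and symmetrically remove) procedure, and then summing. I would begin by fixing notation: let $l$ be the cluster label affected by the operation, and recall that the only data that changes is $\chi_{l,p}$ (for $p\in\Pi(X_l)$) together with the multisets $\Lambda_p$ into which these values are inserted, because the vertex being moved is adjacent to no cluster other than $C_l$; this is exactly the observation stated just before the lemma, so no other $\chi_{l',p'}$ needs recomputation. The skeleton of the argument is therefore: (i) bound the cost of removing the old $\chi_{l,p}$ from the relevant $\Lambda_p$, (ii) bound the cost of running the update of the underlying data structure of Section~\ref{sec:CVD2}, and (iii) bound the cost of recomputing all $\chi_{l,p}$ via Lemma~\ref{lem:CN_recompute} and reinserting them.

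For step (i) and the reinsertion half of step (iii), I would argue as follows. There are at most $|\Pi(X_l)|\le B_{|X|}$ colorings $p$, and for each we perform one removal and one insertion on a balanced binary search tree holding $\Lambda_p$. Since there are at most $n$ clusters, each such tree has size $O(n)$, so a single remove or add costs $O(\log n)$. Summing over all $p$ gives $O(B_{|X|}\log n)$ for the bookkeeping on the $\Lambda_p$. For the recomputation itself, Lemma~\ref{lem:CN_recompute} tells us each individual $\chi_{l,p}$ is computed in $O(2^{|X|}|X|^3)$ time; multiplying by the at most $B_{|X|}$ values of $p$ yields $O(B_{|X|}2^{|X|}|X|^3)$. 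For step (ii), the update to the Section~\ref{sec:CVD2} structure is the vertex move described in Section~\ref{sec:addCVD} (or its removal analogue in the exchange phase), which runs in $O(2^{|X|}|X|^2)$ time; this is dominated by the recomputation cost in step (iii) and so contributes nothing new to the final bound.

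Adding the surviving terms gives $O(B_{|X|}\log n + B_{|X|}2^{|X|}|X|^3)$, which is exactly the claimed bound, so the conclusion follows. I would take care to note explicitly that the $B_{|X|}$ factor is correct because we iterate over $\Pi(X_l)$, and $|X_l|\le|X|$ forces $|\Pi(X_l)|\le B_{|X|}$ by monotonicity of the Bell numbers; this small point is where a reader might otherwise worry, since the loop ranges over colorings of $X_l$ rather than of all of $X$.

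I do not expect any genuine obstacle here, since the lemma is essentially an arithmetic summation over steps whose individual costs are already established (Lemma~\ref{lem:CN_recompute} does the real work). The one place requiring mild care is ensuring the remove-a-vertex procedure is handled symmetrically: I would verify that in the removal case the relevant cluster label $l$ (after the move) is well defined and that the same three-step cost analysis applies verbatim, so that a single bound covers both operations. The correctness of the recomputation — that the newly computed $\chi_{l,p}$ genuinely reflect the updated equivalence classes $C_{l,S}$ — rests on running step (ii) strictly before step (iii), which matches the ordering in the stated procedures.
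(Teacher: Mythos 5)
Your proposal is correct and follows essentially the same route as the paper: the paper's own justification is exactly the sum of $O(B_{|X|}\log n)$ for the $\Lambda_p$ removals/insertions over the at most $B_{|X|}$ colorings $p\in\Pi(X_l)$ and $O(B_{|X|}2^{|X|}|X|^3)$ for recomputing each $\chi_{l,p}$ via Lemma~\ref{lem:CN_recompute}, with the locality observation (only the single affected cluster label $l$ matters) stated immediately before the lemma. Your additional remarks on the dominated $O(2^{|X|}|X|^2)$ cost of the Section~\ref{sec:CVD2} update and on the monotonicity of the Bell numbers are sound and only make explicit what the paper leaves implicit.
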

$O(B_{|X|}\log n)$ is the time to remove (add) $\chi_{l,p}$ from (to)
$\Lambda_p$ for every $p\in\Pi(X_l)$, and $O(B_{|X|}2^{|X|}|X|^3)$ is the time
to compute $\chi_{l,p}$ for every $p\in \Pi(X_l)$.

\subsubsection{Add or Remove an Edge}
Now we are ready to describe how to update the data structure when the graph is
modified.

As we did in Section~\ref{sec:CVD2},
before the edge $\{u,v\}$ is added to or removed from $G$, we move $u$ and $v$
to $X$ making $X$ a cluster vertex deletion
of the modified graph.
When the edge is actually added to (removed from) $G$, 
no change of the data structure is needed since any edge lying on $X$ does not
affect $\chi_{l,p}$ for any cluster label $l$ and $p\in\Pi(X_l)$.

After the edge is added to (removed from)$G$, we compute a minimum
cluster vertex deletion $X'$ using the algorithm in Section~\ref{sec:CVD2}. Then we replace $X$ with $X'$.
Let $R$ be $X\setminus X'$ and $R'$ be $X'\setminus X$.
We begin with adding every vertex in $R'$ to $X$ one by one. Then we remove
every vertex in $R$ from $X$ one by one finishing
the replacement.
During the process, $X$ is always a cluster vertex deletion of the
graph, and thus the algorithm works correctly.
As the size of $X$ is increased to at most $2k$,
by Lemma~\ref{lem:CN:time_add}, the replacement is completed in
$\sum_{k'=k}^{2k}O(B_{k'}\log n + B_{k'}2^{k'}k'^3) = O(B_{2k}(\log n +
4^kk^3))$ time.

After all information is updated, finally we compute the solution of \CN
using the formula (\ref{eqn:CN:sol}). As noted above, it is
computed in $O(B_{|X|}|X|\log n) = O(B_kk\log n) = O(B_{2k}\log n)$ time.

Putting things altogether, now we have proved
Theorem~\ref{thm:CN_CVD}.

\section{Dynamic Graph for Bounded-Degree Feedback Vertex Set}\label{sec:feedback}
A vertex set is called \emph{feedback vertex set} if its removal makes the graph
a forest.
\FVS is the problem of finding a minimum feedback vertex set.
In this section, we assume every vertex in $G$ always has degree at most $d$.
We maintain $X$ to be a minimum feedback vertex set. Initially $X=\emptyset$.
As noted in the Introduction, our algorithm is based on the static algorithm by 
Guo, Gramm, H{\"u}ffner, Niedermeier and
Wernicke~\cite{DBLP:journals/jcss/GuoGHNW06}.
\subsection{Data Structure}
The key point is that we keep the forest $G[V\setminus X]$ using dynamic
tree data structures called \emph{link-cut tree} data structures presented by
Sleator and Tarjan~\cite{DBLP:journals/jcss/SleatorT83}.
Link-cut tree is a classic data structure that supports many operations on a
forest in $O(\log n)$ amortized time.
We exploit the following operations of link-cut trees. All of them are
amortized $O(\log n)$-time operations:
\begin{itemize}
  \item $\mathbf{link}(r,v)$: If the vertex $r$ is a root of a tree and the
  vertices $r$ and $v$ are in different trees, add an edge from $r$
  to $v$.
  \item $\mathbf{cut}(u,v)$: Remove the edge between $u$ and $v$.
  \item $\mathbf{evert}(v)$: Make the vertex $v$ a root of the tree containing
  $v$.
  \item $\mathbf{root}(v)$: Return the root of the tree containing $v$.
  \item $\mathbf{nca}(u,v)$: If the vertices $u$ and $v$ are in the same tree,
  find the nearest common ancestor of $u$ and $v$ in the rooted tree containing $u$ and
  $v$.
  \item $\mathbf{parent}(v)$: If $v$ is not a root, return the parent of
  $v$.
\end{itemize}
For further understanding, see~\cite{DBLP:journals/jcss/SleatorT83}.
Note that we can check whether the vertices $v$ and $u$ are in the same tree
testing if $\mathbf{root}(v)$ equals $\mathbf{root}(u)$ or not.

\subsection{Strategy}\label{sec:fvs:strategy}
As with Section~\ref{sec:cvd2:strategy}, whenever the graph $G$ is updated, we
transform the current solution $X$ to a minimum solution as follows:
\begin{itemize}
\item[(1)] $X$ is extended to be a solution.
\item[(2)] A problem kernel $G[V']$ is obtained using the solution.
\item[(3)] A minimum solution $X'$ is obtained applying a (static)
algorithm for $G[V']$, and $X$ is exchanged for the minimum solution $X'$.
\end{itemize}
Let us call the steps (1), (2) and (3) \emph{growing phase}, \emph{compression
phase} and \emph{exchange phase} respectively.
To compute an exact solution, we can apply any exact FPT algorithm for current
problem kernel $G[V']$.

\subsection{Growing Phase}\label{sec:fvsGrowing}
When an edge $e=\{u,v\}$ is added to $G$, we insert one of its endpoints, say
$u$, to current solution. That is, we add $u$ to $X$ and remove $u$ from the
link-cut tree containing $u$ calling $\mathbf{cut}(u,v)$
for each $v$ in $N(u)\setminus X$. The time complexity of this phase is
$O(d\log n)$. When an edge is removed from $G$, we do nothing.
After the operation, $X$ is a feedback vertex set and the size of $X$ is at most
$k+1$.
\subsection{Compression Phase - Outer Loop}
In our algorithm, we only consider special kinds of solutions of \FVS. We call
them \textit{maximum overlap solutions}.
\begin{definition}[Maximum Overlap Solution]
For given graph $G$ and a feedback vertex set $X$ of $G$,
$X'\subseteq V$ is called \textit{maximum overlap solution} if $X'$ is a minimum feedback
vertex set of $G$ and the size of intersection $|X\cap X'|$ is the maximum
possible.
\end{definition}
At least one maximum overlap solution apparently always exists.
Then, we try to find a maximum overlap solution $X'$.
To do this, we use the idea of exhaustively considering
all possible intersections of $X\cap X'$. That is, we compute a minimum feedback
vertex set $X''$ that satisfies $X\cap X'' = R$ for all $2^{|X|}$ possibilities
of $R\subseteq X$, and let $X'$ be the minimum of them.
Now our job is to solve the following task.
\begin{task}[\textsf{Disjoint Feedback Vertex Set}\xspace]\label{task:disjFVS}
Given a feedback vertex set $X$ and a subset $R$ of $X$, find a minimum vertex
set $S'$ such that $S'$ is a feedback vertex set of $G[V\setminus R]$ and
$S\cap S'= \emptyset$, where $S=X\setminus R$, or output `NO' if there is no
such $S'$.
\end{task}
The important point is that for the purpose of solving \FVS, 
it is sufficient to
obtain the algorithm such that (1) if the algorithm outputs a vertex set $S'$, it is a
correct answer and (2) if there is a maximum overlap solution $X'$ such that
$X\cap X'=R$, the algorithm outputs a correct answer, i.e.,
the algorithm can incorrectly output `NO' if there is no maximum overlap
solution $X'$ such that $X\cap X'=R$.
Thus we consider such an algorithm.

\subsection{Compression Phase - Inner Loop in $O(f(k)|E|)$ time}
First, let us introduce an $O(f(k)|E|)$ algorithm.
Let $V'$ be $V\setminus R$.
We reduce the graph $G[V']$ applying the following two rules recursively.
\begin{enumerate}
  \item If there is a degree 1 vertex $v\in V'\setminus S$, remove $v$ from the
  graph
  \item If there is a degree 2 vertex $v\in V'\setminus S$, remove $v$ and connect its
  neighbors by an edge. It may make parallel edges between the neighbors.
\end{enumerate}
Note that after the reduction every vertex not in $S$ has degree at least
three.
Let $\mathbf{reduced}(R)$ denote the (uniquely determined) reduced graph (see
Fig.~\ref{fig:fvsCore_Reduced}).
  
\begin{figure}[htbp]
 \begin{center}
  \includegraphics[height=25mm]{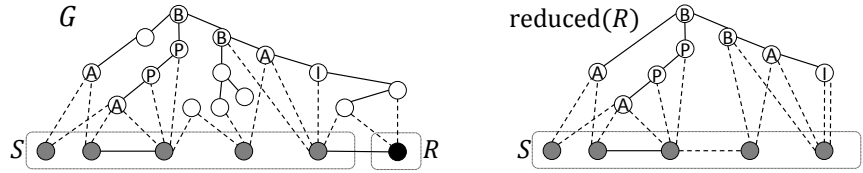}
 \end{center}  
 \caption{ 
 Example of the original graph $G$ and the reduced graph.
 The white vertices with some letters are core vertices, and each letter denotes
 the class of the vertex defined in the proof of Lemma~\ref{lem:boundFVS}.
 }
 \label{fig:fvsCore_Reduced} 
\end{figure}

After the reduction, we try to find a minimum feedback vertex
set disjoint from $S$ in the reduced graph.

This reduction is based on the iterative compression algorithm in~\cite{DBLP:journals/jcss/GuoGHNW06}.
The differences are: (1) we do not reduce the graph even if there are
parallel edges, and (2) we contract the vertex $v$ even if $N(v)\subseteq S$.
The second difference may lead to the algorithm that incorrectly outputs `NO'.
However, by the following lemma, such cases never happen when $R$ is a correct
assumption.
\begin{lemma}\label{lem:niceFVS}
In the setting of Task~\ref{task:disjFVS}, if there is a maximum overlap solution $X'$ such
that $R=X\cap X'$, $\mathbf{reduced}(R)$ has a feedback
vertex set $S'$ such that $S'\cap S = \emptyset$ and $|S'| + |R| =
|X'|$.
\end{lemma}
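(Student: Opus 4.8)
The plan is to exhibit the required feedback vertex set explicitly by taking $S' = X' \setminus R = X' \setminus X$, and then to track $S'$ through the reduction process. First I would record the easy bookkeeping: since $R = X \cap X' \subseteq X$, the set $S' = X' \setminus R$ is disjoint from $S = X \setminus R$ (indeed $S'$ is disjoint from all of $X$), and the partition $X' = R \sqcup S'$ gives $|S'| + |R| = |X'|$. Because $X'$ is a feedback vertex set of $G$, the graph $G[V \setminus X'] = G[(V \setminus R) \setminus S'] = G[V' \setminus S']$ is a forest, so $S'$ is already a feedback vertex set of $G[V']$ that avoids $S$. It then remains to show that the two reduction rules, carried out to produce $\mathbf{reduced}(R)$, neither delete a vertex of $S'$ nor spoil the property that $S'$ hits every cycle.

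The core of the argument is an induction on the reduction steps, maintaining the invariant that $S'$ is a feedback vertex set of the current graph $H$, that $S' \subseteq V(H)$, and that $S' \cap S = \emptyset$. The only way the invariant can break is if a rule removes a vertex $v \in S'$; since both rules remove only a vertex of degree at most two lying in $V' \setminus S$, I must rule out that such a $v$ belongs to $S'$. If $v \in S'$ has degree one in $H$, it lies on no cycle, so $S' \setminus \{v\}$ is still a feedback vertex set of $H$; lifting this through the standard feedback-vertex-set-preserving reductions back to $G$ yields a feedback vertex set of $G$ containing $R$ of size $|X'| - 1$, contradicting the minimality of $X'$. If $v \in S'$ has degree two with neighbours $a, b$, then every cycle through $v$ uses both $a$ and $b$, so replacing $v$ by $a$ (or by $b$) gives a feedback vertex set of $H$ of the same size; lifting it, if $a$ or $b$ lies in $X$ this produces a minimum feedback vertex set of $G$ whose overlap with $X$ strictly exceeds $|R|$, contradicting that $X'$ is a \emph{maximum overlap solution}, while if neither lies in $X$ the exchange merely yields another minimum solution of the same overlap on which the reduction may be continued.

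Having shown that $S'$ is left untouched, I would finally verify that the bypass rule never creates a cycle that $S'$ fails to hit; this is the step I expect to be the main obstacle, precisely because of difference (2), the contraction of a degree-two vertex whose neighbours lie in $S$. When the two neighbours $a, b$ of a bypassed chain both lie in $S$, the new (possibly parallel, possibly self-loop) edge records a cycle of $G[V']$ running through $a$, $b$, and only bypassed vertices of $V' \setminus S$; since none of those vertices belong to $S'$ by the invariant and $a, b \in S$ is disjoint from $S'$, such a cycle would lie entirely inside $G[V' \setminus S']$, contradicting that this graph is a forest. Hence no fatal self-loop or unhittable multi-edge on $S$-vertices is ever created, every cycle of $\mathbf{reduced}(R)$ pulls back to a cycle of $G[V']$ which $S'$ intersects at a vertex that, being in $S'$, survives the reduction and therefore lies on the reduced cycle as well. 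Thus $S'$ is a feedback vertex set of $\mathbf{reduced}(R)$ with $S' \cap S = \emptyset$ and $|S'| + |R| = |X'|$, as required.
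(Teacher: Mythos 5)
Your overall strategy is the same as the paper's: take $S' = X'\setminus R$, observe the cardinality and disjointness for free, and show that no vertex of $S'$ is ever deleted by the reduction, using minimality of $X'$ to kill the degree-one case and maximality of $|X\cap X'|$ to kill (part of) the degree-two case. The bookkeeping, the degree-one case, and the final "pull back every cycle of $\mathbf{reduced}(R)$ to $G[V']$" paragraph are all fine.

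The genuine gap is the branch of the degree-two case where neither neighbour $a,b$ of $v$ lies in $X$. There you write that the exchange "merely yields another minimum solution of the same overlap on which the reduction may be continued" --- but that is not a contradiction and not a conclusion; it is exactly the situation the lemma still has to handle. After the swap you are tracking a different set $S''=(S'\setminus\{v\})\cup\{a\}$, and you have not shown that this process terminates with a solution \emph{all} of whose non-$R$ vertices survive the reduction; a priori the replacement vertex could itself be deleted later, and different vertices of $S'$ may force interleaved swaps. To close this you need a progress argument, e.g.\ an extremal choice: among all minimum feedback vertex sets $Z$ with $X\cap Z=R$ (all of which are maximum overlap solutions of the same size, so proving the lemma for any one of them suffices), pick one whose deleted vertices are removed as late as possible; since each swap replaces a deleted vertex $v$ by a vertex still present in $H$ at the moment $v$ is removed (hence deleted strictly later, if at all), finitely many swaps reach a representative that survives entirely. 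The paper sidesteps this branch altogether by swapping $v$ directly for a vertex $u\in S$ at the end of a $v$--$S$ path (using that a vertex eliminated by the reduction has at most two edge-disjoint paths to $S$), so that every deletion immediately contradicts either minimality or maximum overlap; that step is itself rather terse, but it is where the two arguments genuinely diverge, and it is the step your write-up leaves open.
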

\begin{proof}
We prove that any vertex in $X'\setminus R$ never removed during the
reduction, and the lemma follows since $X'\setminus R$ becomes a minimum
feedback vertex set of $\mathbf{reduced}(R)$.
Let us hypothesize that during the reduction, a vertex $v$ in
$X'\setminus R$ has been removed. It means that there are at most two edge
disjoint paths from $v$ to vertices in $S$. If there is no path from $v$ to $S$,
$X'\setminus \{v\}$ must be a feedback vertex set of $G$ and contradict the
minimality of $X'$.
Otherwise let $u\in S$ be the end point of a path from $v$ to $S$.
Then $X'\setminus \{v\} \cup \{u\}$ must be also a minimum feedback vertex set
and contradict the maximality of $|X\cap X'|$.
\end{proof}
Thus, to solve \FVS for the original graph, we only have to solve \FVS for the
reduced graph. In fact, if the reduced graph has a smaller solution, the graph
must be not so large.
Let us call the vertex set $V(G')\setminus S$ \emph{core vertices}.
As in \cite{DBLP:journals/jcss/GuoGHNW06}, we can prove the following lemma.
\begin{lemma}\label{lem:boundFVS}
In the setting of Task~\ref{task:disjFVS}, if there is a maximum overlap solution $X'$ such
that $R=X\cap X'$, the size of the core vertices is no more than $13|S|$.
\end{lemma}
\begin{proof}
Let $G'$ be $\mathbf{reduced}(R)$ and $V'$ be core vertices $V(G')\setminus S$.
By Lemma~\ref{lem:niceFVS} it holds $(X'\setminus R)\subseteq V'$.
We classify the vertices $V'$ as follows (see Fig.~\ref{fig:fvsCore_Reduced}):
\begin{itemize}
  \item $A = \left\{v\in V'\mid |\delta_{G'}(\{v\},S)| \geq 2 \right\}$.
  \item $B = \left\{v\in V'\setminus A\mid |N(v)|\cap(V'\setminus S)\geq
  3\right\}$.
  \item $C = V'\setminus(A\cup B)$.
\end{itemize}
We prove upper bounds for $A$, $B$ and $C$ separately.

To prove the upper bound for $A$, we consider the subgraph $G_A=(A\cup S,
\delta_{G'}(A,S))$ for $G'$.
If $A\cup S$ is a forest, $|E(G_A)| < |V(G_A)|$. Thus $2|A| \leq |E(G_A)| <
|V(G_A)| = |S| + |A|$ and therefore $|A| < |S|$.
Since we can make $G'$ a forest removing at most $|S|$ vertices disjoint from
$S$, $|A| < 2|S|$.

To prove the upper bound for $B$, we consider the forest $G[V']$.
Observe that all leaves of the forest are from $A$.
In fact if $v$ is a leaf of the forest, $d_{G[V']}(v)\leq 1$, and since
$d_{G'}(v)\geq 3$, $|\delta_{G'}(\{v\},S)| \geq 2$ must hold.
Each vertex in $B$ is an internal node of degree at least three in $G[V']$.
The number of such vertices cannot be more than the number of the leaves,
therefore $|B| \leq |A| < 2|S|$.

Each vertex $v$ in $C$ has a degree two in $G'[V']$ and exactly one incident
edge to $S$. Hence, $G[C]$ is composed of paths and isolated vertices.
Let $P$ be the path vertices and $I$ be the isolated vertices. We separately
bound the number of $P$ and $I$.

For each isolated vertex $v\in I$, the number of edges between $v$ and $A\cup B$
is exactly two. Since $G[V']$ is forest, $2|I| \leq |E(G[A\cup B\cup I])| <
|A\cup B\cup I| < 4|S| + |I|$, and therefore $|I|<4|S|$.
 
To prove the upper bound for $P$, we consider the graph $G[S\cup P]$. There are
exactly $|P|$ edges between $S$ and $P$ and at least $|P|/2$ edges among $G[P]$.
Thus the number of edges in $G[S\cup P]$ is at least $3|P|/2$.
If it is a forest, $3|P|/2 \leq |E(G[S\cup P])| < |V(G[S\cup P])| = |S| + |P|$,
hence $|P| < 2|S|$.
Since we can make $G'$ a forest removing at most $|S|$ vertices disjoint from
$S$ and removing a vertex evicts at most three vertices from $P$, we
obtain that $|P| < 5|S|$.

Altogether, $|V'| = |A|+|B|+|I|+|P|<|S|+2|S|+2|S|+4|S|+5|S|=13|S|$.
\end{proof}

Using Lemma~\ref{lem:niceFVS} and~\ref{lem:boundFVS}, we can solve
Task~\ref{task:disjFVS}.
First we construct the reduced graph in $O(|E|)$ time. Then if
the size of the graph is more than $14|S|$, we safely return `NO', and otherwise
we solve \textsf{Disjoint Feedback Vertex Set} for the reduced graph in
$O(f(k))$ time using some algorithm.
When $k$ is small or a fixed constant, the bottleneck is
the part of reducing the graph in $O(|E|)$ time. To speed up the part, we make
use of operations on link-cut trees.
\subsection{Faster Reduction} \label{sec:reduceFVS}
Let $G'$ denote $\mathbf{reduced}(R)$.
In this section, we show the algorithm computing $G'$ in
$O(k^3d^3\log n)$ amortized time by finding the core without explicitly reducing
the graph.
The graph $G[V\setminus X]$ is a forest. For vertices $u,v,w\in V\setminus
X$ that are in the same tree, let $\mathbf{meet}(u,v,w)$ denote the vertex at
which the path from $v$ to $u$ and the path from $w$ to $u$ firstly meet.
In other words, $\mathbf{meet}(u,v,w)$ is $\mathbf{nca}(v,w)$ on
the tree rooted at $u$, thus is computable in $O(\log n)$ amortized time.

To generate the core vertices $C$, we iterate over every set of three
edges $\{e_1,e_2,e_3\}\subseteq \delta(S, V\setminus X)$ and add
$\mathbf{meet}(v_1,v_2,v_3)$ to $C$, where $v_i$ is the endpoint of $e_i$ that
is in $V\setminus X$.
\begin{lemma}
The vertices $C$ generated by the above algorithm is equal to core
vertices, and the algorithm runs in $O(k^3d^3\log n)$ amortized time.
\end{lemma}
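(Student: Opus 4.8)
The plan is to prove a clean combinatorial characterization of the core vertices and then read off both the correctness of the meet-based generation and its running time from it. Recall that $S=X\setminus R$, that $G[V\setminus X]$ is a forest stored in the link-cut trees, and that the non-$S$ vertices of $\mathbf{reduced}(R)$ all lie in $V'\setminus S=V\setminus X$. For $v\in V\setminus X$ call a forest-edge incident to $v$ a \emph{live direction} if the subtree hanging off $v$ in that direction contains an endpoint of some edge of $\delta(S,V\setminus X)$ (equivalently, reaches $S$), and define the \emph{effective degree} of $v$ as the number of edges of $\delta(\{v\},S)$ plus the number of live directions at $v$. My first step is to show that $v$ is a core vertex of $\mathbf{reduced}(R)$ if and only if its effective degree is at least $3$. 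This follows by tracking the two reduction rules: any forest-subtree that contains no edge to $S$ consists only of vertices of $V'\setminus S$ and is therefore deleted completely by repeated application of rule~1, contributing nothing; once all such dead subtrees are gone, each surviving non-$S$ vertex retains exactly its incident $S$-edges and its live directions, and rule~2 suppresses it precisely when this total is at most $2$. Hence the degree of $v$ in $G'$ equals its effective degree, so $v$ survives as a core vertex iff this value is at least $3$.

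The second step identifies effective degree $\geq 3$ with being the meet of three $S$-edge endpoints. For the easy direction, suppose $v=\mathbf{meet}(v_1,v_2,v_3)$, where $v_i$ is the $V\setminus X$-endpoint of $e_i\in\delta(S,V\setminus X)$ and the three vertices lie in one tree. By the defining property of the median, $v$ lies on all three pairwise paths, so $v_1,v_2,v_3$ occupy pairwise distinct directions at $v$ (a vertex equal to $v$ contributing one of its own incident $S$-edges instead of a forest direction). Each contribution is thus either a live direction or an incident $S$-edge, giving three distinct contributions and effective degree $\geq 3$; so $v$ is a core vertex. Conversely, if $v$ is a core vertex its effective degree is at least $3$, so I can pick three distinct contributions: for an incident $S$-edge take that edge itself, with endpoint $v$; for a live direction take any edge of $\delta(S,V\setminus X)$ whose $V\setminus X$-endpoint lies in the corresponding subtree. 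Since the chosen contributions lie in pairwise distinct directions at $v$, the three paths from $v$ to the selected endpoints meet only at $v$, whence $v$ is exactly their median, i.e.\ $v=\mathbf{meet}(v_1,v_2,v_3)$. Together with the first step this proves that $C$ equals the set of core vertices, where the algorithm simply discards triples whose endpoints do not all lie in one tree (detected by comparing $\mathbf{root}$ values).

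For the running time, note that $|S|\le|X|\le k+1$ and that every vertex of $S\subseteq X$ has degree at most $d$, so $|\delta(S,V\setminus X)|\le d|S|=O(kd)$. The algorithm ranges over the $\binom{O(kd)}{3}=O(k^3d^3)$ triples of these edges; for each triple it performs a constant number of $\mathbf{root}$ comparisons to test whether the endpoints lie in one tree and, when they do, one $\mathbf{meet}$ evaluation, which is an $\mathbf{nca}$ query in the appropriately rooted tree and hence runs in $O(\log n)$ amortized time. Therefore the total cost is $O(k^3d^3\log n)$ amortized, as claimed.

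The main obstacle is the converse half of the second step, namely showing that \emph{every} core vertex is realized as a meet. The delicate points are verifying that each of the $\geq 3$ branches of a core vertex is genuinely live (which is exactly what the dead-subtree analysis of the first step supplies) and handling the degenerate triples in which $v$ coincides with one or more $v_i$, or carries several incident $S$-edges; in those cases one must check that distinct edges of $\delta(S,V\setminus X)$ still induce distinct directions, so that the median computation returns $v$ rather than collapsing. Once the effective-degree characterization is established rigorously from the two reduction rules, these cases become routine, so that is where I would concentrate the effort.
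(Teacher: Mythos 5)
Your proof is correct and follows essentially the same route as the paper: both rest on the characterization that a vertex of $V\setminus X$ is a core vertex iff it has at least three edge-disjoint connections to $S$ (your ``effective degree $\geq 3$'' is exactly the paper's ``three edge-disjoint paths to $S$ with no internal vertex in $X$''), followed by the observation that such vertices are precisely the meets of triples of endpoints of $\delta(S,V\setminus X)$, and the same $O(k^3d^3)$ count of triples with an $O(\log n)$ amortized meet query each. The only difference is one of rigor: you derive the characterization from the two reduction rules and treat the degenerate triples explicitly, whereas the paper asserts it without proof.
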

\begin{proof}
The time complexity is straightforward from the above discussion.
Let $v$ be a vertex in $V\setminus X$.
The vertex $v$ is a core vertex if and only if there are at least three
edge disjoint paths from $v$ to $S$ that contains no internal vertex in $X$.
If $v$ is a core vertex and there are such paths $P_1,P_2$ and $P_3$ to
$s_1,s_2,s_3\in S$, $v$ must be in $C$ since $v$ is the vertex at which the
path from $s_1$ to $s_3$ through $P_1$ and $P_3$ and the path from $s_2$ to
$s_3$ through $P_2$ and $P_3$ firstly meet.
Conversely, if $v$ is in $C$, it directly means there are at least three edge
disjoint paths from $v$ to $S$ that contains no internal vertex in $X$.
\end{proof}
After finding the core vertices, we can also compute the edges in $G'$ in
$O(k^3d^3\log n)$ amortized time using link-cut tree operations as follows.

Let us construct the graph $G''$ to be the reduced graph $G' =
\mathbf{reduced}(R)$.
The vertex set of $G''$ is $C\cup S$.

Firstly, let us consider the edges in $G'[C]$.
We iterate over every $u,v\in C$.
The edge $\{u,v\}\subseteq C$ is in $G'$ if and only if there is a path from $u$
to $v$ in the forest $G[V\setminus X]$ and it contains no internal vertex in $C$.
There is a path from $u$ to $v$ if and only if $u$ and $v$ are in the same tree.
To check if the path contains an internal vertex in $C$, 
firstly we make the vertex $u$ a root of the tree containing $u$ calling
$\mathbf{evert}(u)$.
Then we iterate over every $w\in C$ without $u$ and $v$.
For each $w\in C$, we cut the edge between $w$ and its parent and test if $u$
and $v$ are still in the same tree. If $u$ and $v$ are now separated, it means $w$
were on the path from $u$ to $v$. After checking, we restore the cut edge.
If there is no internal vertex $w$, we add the edge $\{u,v\}$ to $G''$. Since
$|C|\leq 14|S|$, using the algorithm, we complete the construction of $G''[C]$ in
$O(k^3\log n)$ amortized time.

Let us consider the edges between $S$ and $C$.
Observe that each edge from $s\in S$ to $v\in C$ in $G'$ corresponds
to a path from $s$ to $v$ that have no internal vertex in $C\cup S$.
Using the algorithm similar to the previous one, we can compute the edges in
$O(k^3d\log n)$ time.
Firstly we iterate over every $\{s,u\}\in \delta_G(S,V\setminus X)$ and $v\in
C$, where $s\in S$ and $u\in V\setminus X$.
We call $\mathbf{evert}(u)$, and for each $w\in C$ without $u$, we
check if $w$ is an internal vertex of the path from $s$ to $u$ in $O(\log n)$
amortized time.
If there is no internal vertex in $C$, we add the edge $\{s,v\}$ to $G''$.
Since $|\delta_G(S,V\setminus X)|=O(kd)$, using the algorithm, we complete
the construction of the edges between $S$ and $C$ in $O(k^3d\log n)$ amortized time.

Finally let us consider the edges in $S$. 
We add every edge in $G[S]$ to $G''$ in $O(k^2)$ time.
Furthermore, for each $s,t\in S$, if there is a path of length more than 1 from
$s$ to $t$ whose internal vertices not containing a vertex in $C\cup S$, it becomes an edge	
between $s$ and $t$.
To find the paths, we iterate over every set of two edges $\{e_1,e_2\}\subseteq
\delta(S, V\setminus X)$.
Let $e_i=\{s_i,v_i\}$, $s_i\in S$ and $v_i\in V\setminus X$.
If $v_1$ and $v_2$ are in the same tree, there is a
path $P$ from $v_1$ to $v_2$ in $G[V\setminus X]$. To check if $P$ contains
core vertex or not, we iterate over every edge $e_3$ in
$\delta(S,V\setminus X)$ without $e_1$ and $e_2$.  Let $v_3$ be the endpoint of
$e_3$ that is in $V\setminus X$.
If $v_3$ and $v_1$ is in the same tree, we can conclude $P$ contains core
vertex $\mathbf{meet}(v_1,v_2,v_3)$.
If there is no such edge, we can conclude $P$ contains no core vertex by the
construction of the core vertices, and add $\{v_1,v_2\}$ to $G''$.
It runs in $O(k^3d^3\log n)$ amortized time.

Now, we have completed the construction of the reduced graph $G'' = G'$.
Altogether, we can construct the reduced graph in $O(k^3d^3\log n)$
amortized time.  


Now we are ready to solve Task~\ref{task:disjFVS}.
First, we compute $G' = \mathbf{reduced}(R)$ in $O(k^3d^3\log n)$ amortized
time. If $|V(G')|\geq 14|S|$, we return `NO'.
Otherwise, for example, using the $O(3.83^kk|V|^2)$-time algorithm proposed by
Cao, Chen and Liu~\cite{DBLP:conf/swat/CaoCL10}, we solve the problem in
$O(3.83^kk^3)$ time.

Together with the outer loop cost $O(2^k)$, we finish the compression
phase in $O(7.66^kk^3 + 2^kk^3d^3\log n)$ amortized time.
\subsection{Exchange Phase}
Given a minimum feedback vertex set $X'$, the remaining work is to exchange the
solution $X$ to $X'$.
Let $R$ be $X\cap X'$, $S$ be $X\setminus R$ and $S'$ be $X'\setminus R$.
First, we add every vertex in $S'$ to $X$ as in Sectiondix~\ref{sec:fvsGrowing}.
Then, for each vertex $u$ in $S$ we remove $u$ from $X$ and add $u$ to the
link-cut trees calling $\mathbf{evert}(v)$ and $\mathbf{link}(u,v)$ for each
$v$ in $N(u)\setminus X$.
Since $|X|\leq k+1$ and $|X'|\leq k$, the amortized time complexity of this
phase is $O(kd\log n)$.

Altogether, we have obtained an $O(7.66^kk^3 + 2^kk^3d^3\log n)$-time dynamic
graph algorithm for \FVS, where $d$ is a degree bound on the graph.

\section{Acknowledgement}
Yoichi Iwata is supported by Grant-in-Aid for JSPS
Fellows (256487). Keigo Oka is supported by JST,
ERATO, Kawarabayashi Large Graph Project.

\bibliographystyle{abbrv}
\bibliography{paper}

\end{document}